\theoremstyle{plain}
\newtheorem{mythm}{Theorem}
\newtheorem{mylemma}{Lemma}
\newtheorem{mycorollary}{Corollary}
\theoremstyle{definition}
\newtheorem{mydef}{Definition}
\newtheorem{myexample}{Example}
\theoremstyle{remark}
\newtheorem{myrm}{Remark}
\DeclareMathOperator*{\argmin}{arg\,min}
\begin{document}
\title{Second-Order Coding Rates for Conditional Rate-Distortion} 

\author{Sy-Quoc Le,$\,\,\,$ Vincent Y.~F.\ Tan,$\,\,\,$ Mehul Motani \thanks{The authors are   with the Department of Electrical and Computer Engineering (ECE), National University of Singapore (NUS). V.~Y.~F.\ Tan is also with the Department of Mathematics, NUS.  The authors' emails are \url{le.sy.quoc@nus.edu.sg},  \url{vtan@nus.edu.sg} and  \url{motani@nus.edu.sg}. } } 
 
\maketitle
\begin{abstract}
This paper characterizes the second-order coding rates for lossy source coding with side information available at both the encoder and the decoder. We first provide non-asymptotic bounds for this problem and then specialize the non-asymptotic bounds for three different scenarios: discrete memoryless sources, Gaussian sources, and Markov sources. We obtain the second-order coding rates for these settings. It is interesting to observe that the second-order coding rate for Gaussian source coding with Gaussian side information available at both the encoder and the decoder is the same as that for Gaussian source coding without side information.  Furthermore, regardless of the variance of the side information, the dispersion is $1/2$ nats squared per source symbol.
\end{abstract}

\section{Introduction}
In almost lossless source coding, the Shannon entropy of a source is, on average, the minimum number of bits required to represent a given source \cite{Shannon48}. In lossy source coding, the rate-distortion function (which, in this paper, is more specifically called  the rate-distortion function without side information) plays the role of the Shannon entropy \cite{Shannon59S}. The rate-distortion function without side information is the minimum number of bits per symbol required to reconstruct a given source with the probability of excess distortion being asymptotically small, or with an average distortion that does not exceed a specified upper bound. 

The class of  source coding problems with side information is important as it can model many practical problems. Consider a scenario when a source wants to  transmit a high-resolution image to a receiver who happens to have a low-resolution version of the same image. In another example, the source may be a piece of music contaminated by a background noise source  and the intended receiver has already had observations of the background noise.  The rate-distortion problem without side information  can be extended to the case when the side information is available at both the encoder and the decoder \cite{Berger71, Gray72}, only causally available at the decoder \cite{Weissman06}, or non-causally available at the decoder (i.e., Wyner-Ziv problem) \cite{WZ76}. 
The rate-distortion function for  stationary-ergodic sources with side information was found in \cite{LG}. The rate-distortion function for mixed types of side information (i.e., a mixture of some side information known at both the encoder and the decoder and some known only at the decoder) was evaluated in \cite{FE06}. For memoryless sources, delayed side information at the decoder does not improve the rate-distortion function. However, this is not the case for sources with memory \cite{SP13}. The authors of \cite{LZZ00} considered source coding with side information, and  with distortion measures as functions of side information. 

All the results shown above hold provided the blocklength, i.e., the number of source symbols, is allowed to grow without bound. However, some applications are required to operate with short blocklengths due to delay or complexity constraints at the destination. Thus, it is of high interest to characterize the finite blockength rate-distortion function, i.e., the minimum number of bits per symbol that is required to reconstruct a source at a given fixed blocklength. This is, in general, a difficult task, and thus, we focus on approximating this quantity. 
 \subsection{Related Works}

Strassen \cite{Strassen62} obtained the second-order coding rate for  almost lossless source coding without side information.  Recently, Hayashi \cite{Hayashi08} considered second-order coding rate for fixed-length source coding and showed that the outputs of fixed-length source codes are not uniformly distributed (debunking Han's folklore theorem \cite{Han_folklore} in the second-order sense).  Kostina and Verd\'u~\cite{KV12} and Ingber and Kochman~\cite{IK11} characterized the dispersion of lossy source coding problem without side information. When the source is stationary and memoryless, they showed that the finite blockength rate-distortion function without side information $R_{\mathsf{noSI}}(n,D,\epsilon)$ can be approximated as
\begin{align}
R_{\mathsf{noSI}}(n,D,\epsilon) = R_{\mathsf{noSI}} (D) + \frac{\sqrt{V_{\mathsf{noSI}}(D)}}{n} Q^{-1}(\epsilon) + O \left(\frac{\log n}{n} \right),
\end{align}
where $R_{\mathsf{noSI}}(D)$ is the rate-distortion function without side information,  $V_{\mathsf{noSI}}(D)$ is the \textit{dispersion} that characterizes the convergence rate to the Shannon limit  $R_{\mathsf{noSI}}(D)$, $n$ is the blocklength, $D$ is the excess distortion threshold, and $\epsilon$ is the upper bound on the probability that the distortion exceeds $D$. The rate-distortion problem may also be studied from the moderate deviations perspective \cite{Tan12} and the fundamental limit there is also dependent on $V_{\mathsf{noSI}}(D) $. Achievable second-order coding rates for the Wyner-Ahlswede-Korner problem of almost-lossless source coding with rate-limited side-information, the Wyner-Ziv problem of lossy source coding with side-information at the decoder and the Gelfand-Pinsker problem of channel coding with non-causal state information available at the decoder were established in \cite{WKT13}. The paper \cite{KontoVerdu12} studied  second-order coding rates for the fixed-to-variable lossless compression. For other related works in the study of fixed error asymptotics, the reader is referred to~\cite{Tan_mono}.

 \subsection{Main Contributions}
This paper focuses on the analysis  and approximation of the  finite blockength rate-distortion function for source coding with side information available at both the encoder and the decoder. The contributions of this paper are stated below.
\begin{itemize}
\item A non-asymptotic achievability bound is established for the problem of lossy source coding with side information available at both the encoder and the decoder.
\item We establish the second-order coding rate for the discrete memoryless source with a side information variable taking values in a finite alphabet. As a corollary, we obtain the second-order coding rate for the case when the source alphabet, the reconstruction alphabet and the side information alphabet are finite and the distortion measure is the Hamming distance.
\item We establish the second-order coding rate for Gaussian source with Gaussian side information and the squared-error distortion measure. Somewhat interestingly, the dispersion does not depend on the variance of the side-information and is $1/2$ squared nats per source symbol. 
\item When the source has memory, we establish the second-order coding rate for the case where the  sequence of source  and side information variables jointly forms a time-homogeneous Markov chain.  
\end{itemize}
 
 \subsection{Paper Outline}
 The paper is organized as follows. In section \ref{C5sec:problem} we formulate the problem, and define important concepts which are used throughout the paper. In section \ref{C5sec:nonasymptotic} we present non-asymptotic bounds for the source coding problems with side information available at both the encoder and the decoder. These so-called one-shot bounds hold for any blocklength. Based on the bounds established in section \ref{C5sec:nonasymptotic}, we establish the second-order coding rates for the discrete memoryless source, the Gaussian source and the Markov source in sections \ref{C5sec:DM}, \ref{C5sec:G} and \ref{C5sec:M} respectively. Technical  proofs are presented in section \ref{C5sec:app}.

 \section{Problem formulation and definitions} \label{C5sec:problem}
 Let $\mathcal{X}$ be the source alphabet, let $\mathcal{Y}$ be the reproduction alphabet, and let $\mathcal{S}$ be the side information alphabet. The random variables $X,Y$ and $S$ follow the distribution
\begin{align}
P_{YXS}(yxs) = P_{Y|XS}(y|xs) P_{X|S} (x|s) P_{S}(s). \label{C5E1}
\end{align}
We use a single-letter fidelity criterion to measure the distortion between the source sequence $x^n$ and the reproducing sequence $y^n$, i.e.,
 \begin{align}
 d(x^n, y^n) = \frac{1}{n} \sum_{i=1}^n d(x_i, y_i),
 \end{align}
 where $d:\mathcal{X}^n \times \mathcal{Y}^n \to \mathbb{R}_+$, for $n \in \mathbb{N}$, is a bounded real-valued non-negative distortion function. 
 \begin{mydef}
 
\begin{figure}
\centering
\includegraphics[width=0.6\textwidth]{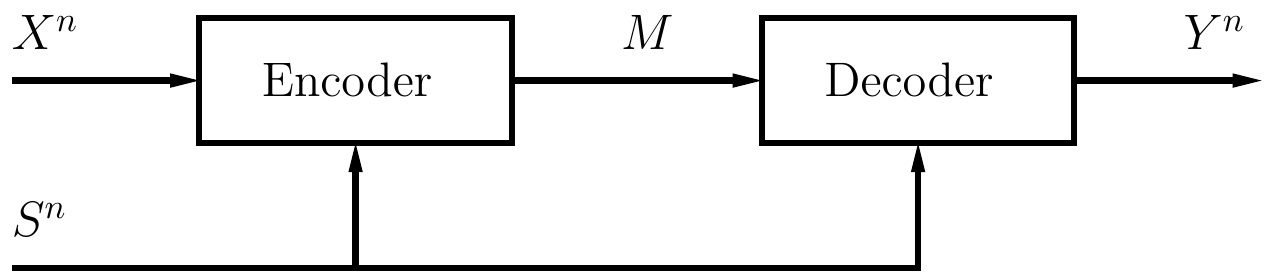}
\caption{Source coding with side information}
\label{fig:C5SCSI}
\end{figure}

 An $(M_n,n,D,\epsilon_n)$-code for the source coding system with side information (see Figure \ref{fig:C5SCSI}) consists of an encoding function
 \begin{align}
 \phi_n: \mathcal{X}^n \times \mathcal{S}^n \to \mathcal{M}_n \triangleq \{1,2,\ldots, M_n\},
 \end{align}
 and a decoding function
  \begin{align}
 \psi_n: \mathcal{M}_n \times \mathcal{S}^n \to \mathcal{Y}^n,
 \end{align}
 such that the probability of excess distortion satisfies
 \begin{align}
 \Pr\{d[X^n, \psi_n(\phi_n(X^n,S^n),S^n)] > D\} \leq \epsilon_n.
 \end{align}
 \end{mydef}
 An $(M_n,n,D,\epsilon_n)$-code, which is defined as shown above, is called a \textit{$D$-semifaithful} code in the rate-distortion literature \cite{YS93, ZYW}.
 
 \begin{mydef}
 A rate $R$ is defined to be \textit{$(\epsilon,D)$-achievable} if there exists a sequence of $(M_n,n,D,\epsilon_n)$-codes satisfying
 \begin{align}
 \limsup_{n \to \infty} \frac{1}{n} \log M_n &\leq R, \\
 \limsup_{n \to \infty} \epsilon_n  &\leq \epsilon.
 \end{align}
 \end{mydef}
 
 In contrast to the above definition, the following definition is non-asymptotic. 
 \begin{mydef}
 A rate $R$ is defined to be \textit{$(\epsilon,D,n)$-achievable} if there exists a $(\lfloor \exp(nR)\rfloor,n,D,\epsilon_n)$-code.
 The $(\epsilon,D,n)$ finite blockength rate-distortion function $R(\epsilon,D,n)$ is defined as the infimum of the set of all $(\epsilon,D,n)$-achievable rates.
 \end{mydef}
 
 The following definition defines the quantity of interest in this paper.
 \begin{mydef}
 A number $L \in \mathbb{R}$ is defined to be \textit{second-order $(\epsilon,D,\kappa)$-achievable} if there exists a sequence of   $(M_n,n,D,\epsilon_n)$-codes satisfying
 \begin{align}
 \limsup_{n \to \infty} \frac{1}{\sqrt{n}} (\log M_n - n \kappa) &\leq L,\\
 \limsup_{n \to \infty} \epsilon_n  &\leq \epsilon.
 \end{align}
 The \textit{$(\epsilon,D,\kappa)$ second-order rate-distortion function} $L^*(\epsilon,D,\kappa)$ is defined as the infimum of the set of all second-order $(\epsilon,D,\kappa)$-achievable rates.
 \end{mydef}
 
The aim of this paper is to characterize the $(\epsilon,D,\kappa)$ second-order rate-distortion function $L^*(\epsilon,D,\kappa)$ for source coding with side information available at both the encoder and the decoder.

Before presenting the main result, we state some definitions that will be used throughout this paper.

\begin{mydef} \label{C5Erdf}
Fix the distribution of $XS$ as $P_{XS}$. Define the rate-distortion function with side information as
\begin{align}
R(X;D|S) = \min_{P_{Y|XS}} I(X;Y|S), \label{C5Erdf1}
\end{align} 
where the minimum is taken over the set of all marginal conditional distributions $P_{Y|XS}$ satisfying 
\begin{align}
P_{Y|XS}(y|xs) &\geq 0 \qquad \mbox{ for all $(y,x,s)$}, \\
\sum_{y \in \mathcal{Y}} P_{Y|XS}(y|xs) &= 1, \\
\sum_{s \in \mathcal{S},x \in \mathcal{X}, y \in \mathcal{Y}} P_{Y|XS}(y|xs) P_{X|S}(x|s) P_S(s) d(x,y) &\leq D.
\end{align}
To make  the dependence on the distribution $P_{XS}$ explicit, we sometimes also denote $R(X;D|S)$ as $R(P_{X|S},D|P_S)$. Assume the distribution that achieves the minimum in (\ref{C5Erdf1}) is unique. When there is no side information, i.e., $S= \emptyset$, we recover the rate-distortion function without side information denoted as $R(X;D)$ or $R(P_{X},D)$.
\end{mydef}

When the excess distortion criterion is employed, we have the following first-order result for the source coding problem with side information \cite{Berger71} (i.e., the conditional rate-distortion problem \cite{Gray72}),
\begin{align}
\lim_{\epsilon \to 0} \liminf_{n \to \infty} R(\epsilon,D,n) = R(X;D|S).
\end{align} 


In order to characterize the second-order rate-distortion function, we state the following definitions.  The notion of information densities will play an important role in characterizing the second-order rate-distortion function. In fact, in order to deal with the constraints inherent in the rate-distortion problem, the concept of $D$-tilted information densities, which was introduced in  \cite{KV12CISS}, is useful.
\begin{mydef}
Define the conditional information densities as follows:
\begin{align}
i_{X;Y|S}(x;y|s) &\triangleq \log \frac{P_{XY|S}(xy|s)}{P_{Y|S}(y|s) P_{X|S}(x|s)} ,\quad\mbox{and}\\
i_{X|S}(x|s)     &\triangleq i_{X;X|S}(x;x|s). \label{eqn:self_info}
\end{align}
Note that $i_{X|S}$ is also known as the conditional self-information. 
\end{mydef}

\begin{mydef}
Define the conditional $D$-tilted information density as follows:
\begin{align}
j_{X|S}(x,D|s) \triangleq \log \frac{1}{\mathbb{E}[\exp\{\lambda^*D - \lambda^* d(x,Y^*)\}|S=s]}
\end{align}
where $P_{Y^*|XS}$ is the  distribution that achieves the minimum in (\ref{C5Erdf}), the expectation is taken with respect to the induced output distribution $P_{Y^*|S}(y|s) = \sum_x P_{Y^*|XS}(y|x,s)P_{X|S}(x|s)$, and $\lambda^*$ is defined as
\begin{align}
\lambda^* \triangleq \frac{dR(P_{X|S},D|P_S)}{dD}.
\end{align}
\end{mydef}
\begin{myrm}
In this definition, the conditional $D$-tilted information density has a built-in feature which takes  the distortion constraint  into consideration. 
\end{myrm}

The conditional $D$-tilted information density $j_{X|S}(x,D|s)$ has some important properties which can be  found in \cite{KV12CISS}. We review them here.
\begin{mylemma} \label{C5Ldtid}
The conditional $D$-tilted information density $j_{X|S}(x,D|s)$ has  the following properties.
\begin{enumerate}
\item $j_{X|S}(x,D|s)   = i_{X;Y*|S}(x;y|s) + \lambda^*d(x,y) - \lambda^*D.$ 
\item $R(X;D|S) = \mathbb{E}[j_{X|S}(X,D|S)].$ 
\item For any $P_{Y|S}$ where $X \to S \to Y$, we have  $\mathbb{E}[\exp\{\lambda^*d - \lambda^*d(X,Y) + j_{X|S}(X,D|S)\}] \leq 1. $
\end{enumerate}

\end{mylemma}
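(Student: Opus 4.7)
\textbf{Proof plan for Lemma~\ref{C5Ldtid}.} The plan is to derive all three properties from the tilting (KKT) characterization of the optimizer $P_{Y^*|XS}$ in the conditional rate-distortion problem. First, conditioned on $S=s$, the Lagrangian for the constrained minimization in (\ref{C5Erdf1}) with the distortion constraint dualized by $\lambda^*$ forces the optimal kernel to satisfy the familiar tilting identity
\begin{align*}
\frac{P_{Y^*|XS}(y|x,s)}{P_{Y^*|S}(y|s)} = \frac{\exp\{-\lambda^* d(x,y)\}}{\mathbb{E}[\exp\{-\lambda^* d(x,Y^*)\}\,|\,S=s]},
\end{align*}
valid for every $y$ in the support of $P_{Y^*|S}(\cdot|s)$. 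Taking logarithms and rearranging immediately yields Property~1: the conditional information density $i_{X;Y^*|S}(x;y|s)$ equals $-\lambda^* d(x,y)-\log \mathbb{E}[\exp\{-\lambda^* d(x,Y^*)\}\,|\,S=s]$, and adding and subtracting $\lambda^* D$ rewrites this in terms of $j_{X|S}(x,D|s)$.

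For Property~2, I would take the $P_{Y^*XS}$-expectation of Property~1. The term $\mathbb{E}[i_{X;Y^*|S}(X;Y^*|S)]$ equals the conditional mutual information $I(X;Y^*|S)$, which is precisely $R(X;D|S)$ since $P_{Y^*|XS}$ attains the minimum in Definition~\ref{C5Erdf}. Since the distortion constraint is active at the optimum whenever $D$ lies strictly below the maximal distortion (guaranteed by the uniqueness assumption together with $\lambda^*>0$), we have $\lambda^*(\mathbb{E}[d(X,Y^*)]-D)=0$, yielding $R(X;D|S)=\mathbb{E}[j_{X|S}(X,D|S)]$.

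For Property~3, substituting the definition of $j_{X|S}$ reduces the claim (reading the leading ``$\lambda^*d$'' as $\lambda^* D$) to
\begin{align*}
\mathbb{E}\!\left[\frac{\exp\{-\lambda^* d(X,Y)\}}{\mathbb{E}[\exp\{-\lambda^* d(X,Y^*)\}\,|\,S]}\right]\leq 1,
\end{align*}
where $(X,S,Y)\sim P_S P_{X|S} P_{Y|S}$ by the Markov hypothesis $X\to S\to Y$. Using the tower property and conditioning on $S=s$, the outer expectation splits as an integral over $y$ against $P_{Y|S}(\cdot|s)$ of
\begin{align*}
g(y,s)\triangleq \sum_{x}P_{X|S}(x|s)\,\frac{\exp\{-\lambda^* d(x,y)\}}{\mathbb{E}[\exp\{-\lambda^* d(X,Y^*)\}\,|\,S=s]\big|_{X=x}}.
\end{align*}
I would then invoke the conditional analogue of the classical Csisz\'ar inequality: for each $s$, the distribution $P_{Y^*|S=s}$ is the optimal output distribution of the rate-distortion problem with source $P_{X|S=s}$, and hence $g(y,s)\leq 1$ for every $y\in\mathcal{Y}$, with equality for $y$ in the support of $P_{Y^*|S=s}$. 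This inequality is equivalent to first-order optimality at $P_{Y^*|S=s}$ under arbitrary point-mass perturbations. Averaging $g(y,s)\leq 1$ first over $P_{Y|S}(\cdot|s)$ and then over $P_S$ completes the proof.

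The main obstacle is Property~3: one must justify the conditional Csisz\'ar inequality carefully. The Markov assumption $X\to S\to Y$ is precisely what lets me condition on $S$ and reduce to the unconditional inequality slice-by-slice; without it, $Y$ could carry information about $X$ beyond what is contained in $S$, and the expectation would no longer factor in a way that allows the per-slice bound to be applied.
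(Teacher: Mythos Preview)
Your proposal is correct and follows the standard tilted-distribution (KKT) approach to these identities. Note, however, that the paper itself does not prove this lemma: it simply cites \cite{KV12CISS} and restates the properties, so there is no paper proof to compare against. Your argument is precisely the conditional-on-$S$ version of the Kostina--Verd\'u derivation, and each step (the exponential-tilting form of $P_{Y^*|XS}/P_{Y^*|S}$ for Property~1, complementary slackness for Property~2, and the per-slice Csisz\'ar optimality inequality $g(y,s)\le 1$ combined with the conditional independence afforded by $X\to S\to Y$ for Property~3) is sound.
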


In the achievability proof of the conditional rate-distortion problem, the following concept is important.
\begin{mydef}
Given a source sequence $x^n \in \mathcal{X}^n$, define the $D$-ball $B_{D}(x^n)$  around this sequence as 
\begin{align}
B_D{(x^n)} \triangleq \{y^n \in \mathcal{Y}^n | d(x^n, y^n) \leq D \}.
\end{align}
\end{mydef}



The following  is the  cumulative distribution function of a standard Gaussian distribution
\begin{equation}
\Phi(t)\triangleq  \int_{-\infty}^t  \frac{1}{\sqrt{2\pi}}\exp(-u^2/2)\,\mathrm{d} u. 
\end{equation}
The complementary cumulative distribution function is $Q(t) \triangleq 1-\Phi(t)$. Since these functions are monotonic, they admit inverses, which we will denote as $\Phi^{-1}$ and $Q^{-1}$.

\section{Non-Asymptotic Bounds} \label{C5sec:nonasymptotic}
In this section, we first present a non-asymptotic achievability bound.
\begin{mylemma}[Achievability] \label{C5Lrcb}
For every  $P_{\bar{Y}^n|S^n} $, there exists an $(M_n, D, n, \epsilon_n)$-code such that
\begin{align}
\epsilon_n \leq \mathbb{E} \{\mathbb{E}[(1 - P_{\bar{Y}^n|S^n} (B_D(x^n) |S^n)^M] \} \label{C5Ercb}
\end{align}
where we have
\begin{align}
P_{\bar{Y}^n X^n S^n} = P_{\bar{Y}^n|S^n} P_{X^n|S^n} P_{S^n}. \label{C5Ercb1}
\end{align}
\end{mylemma}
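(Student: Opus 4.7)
The plan is to use a standard conditional random coding argument, where both the encoder and the decoder share a codebook that is generated independently for each realization of the side information (since $S^n$ is available to both). Fix the conditional distribution $P_{\bar{Y}^n|S^n}$ stated in the lemma. For each $s^n \in \mathcal{S}^n$, independently draw $M_n$ candidate reconstructions $\bar{Y}^n(1,s^n), \ldots, \bar{Y}^n(M_n,s^n)$, each i.i.d.\ from $P_{\bar{Y}^n|S^n}(\cdot|s^n)$. The relation (\ref{C5Ercb1}) encodes precisely the conditional independence $\bar{Y}^n \perp X^n \mid S^n$ that justifies this construction.

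Next I would specify the encoder and decoder. Given $(x^n,s^n)$, the encoder $\phi_n$ searches for the smallest $m \in \mathcal{M}_n$ such that $\bar{Y}^n(m,s^n) \in B_D(x^n)$, i.e., $d(x^n,\bar{Y}^n(m,s^n)) \le D$; if no such $m$ exists, it transmits an arbitrary default index (say $m=1$), which simply contributes to the error event. The decoder $\psi_n$, knowing $s^n$ and the shared codebook, outputs $\bar{Y}^n(m,s^n)$. Then the excess-distortion event $\{d(X^n,\psi_n(\phi_n(X^n,S^n),S^n)) > D\}$ occurs exactly when \emph{every} codeword misses the $D$-ball around $X^n$.

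Conditioning on $(X^n,S^n)=(x^n,s^n)$ and using the conditional i.i.d.\ structure of the codewords given $S^n$, the probability of this event for a random codebook is
\begin{align}
\Pr\bigl\{\text{error}\,\big|\,X^n=x^n,S^n=s^n\bigr\} = \bigl(1 - P_{\bar{Y}^n|S^n}(B_D(x^n)\mid s^n)\bigr)^{M_n}.
\end{align}
Averaging over $(X^n,S^n)$ (using $P_{X^n|S^n}P_{S^n}$, i.e., first the inner expectation over $X^n$ given $S^n$ and then the outer expectation over $S^n$), the expected error probability of the random ensemble is
\begin{align}
\mathbb{E}[\epsilon_n] \le \mathbb{E}\bigl\{\mathbb{E}[(1 - P_{\bar{Y}^n|S^n}(B_D(X^n)\mid S^n))^{M_n}]\bigr\},
\end{align}
which matches the right-hand side of (\ref{C5Ercb}). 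A standard random-coding selection argument then guarantees at least one realization of the codebook whose (deterministic) error probability does not exceed this ensemble average, producing the desired $(M_n,D,n,\epsilon_n)$-code.

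There is no real obstacle in this argument; the only point that needs a little care is to make explicit that the codebook is generated conditionally on $s^n$, so that both encoder and decoder can reproduce it from their common side information $S^n$, and that the factorization in (\ref{C5Ercb1}) is exactly what permits exchanging the order of drawing the codebook and the source symbols and thus justifies the product form $(1-P_{\bar{Y}^n|S^n}(B_D(x^n)|s^n))^{M_n}$.
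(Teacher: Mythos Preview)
Your proposal is correct and follows essentially the same route as the paper: a random codebook generated conditionally on $s^n$ from $P_{\bar{Y}^n|S^n}$, an encoder that succeeds iff some codeword lands in $B_D(x^n)$, the product form $(1-P_{\bar{Y}^n|S^n}(B_D(x^n)\mid s^n))^{M_n}$ from conditional independence, and a final random-coding selection to extract a deterministic code. The only cosmetic difference is that the paper's encoder uses $\argmin_m d(x^n,Y^n(m,s^n))$ rather than the first $m$ with $d\le D$, but the resulting excess-distortion event is identical.
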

\begin{proof}
Given each side information sequence $S^n = s^n$, we construct a reconstruction codebook $\mathcal{C}(s^n)$, which consists of $M$ random reconstruction sequences $\{Y^n(m,s^n)\}_{m=1}^M$. Each of the sequence $Y^n(m,s^n)$, for $m \in \mathcal{M} \triangleq \{1,2,\ldots,M \}$, is generated independently  according to an arbitrary distribution $P_{\bar{Y}^n|S^n= s^n}$, which satisfies equation (\ref{C5Ercb1}).
Choose a sub-code $(\phi_n, \psi_n)$, the encoder and decoder of which are defined as
\begin{align}
\phi_n(x^n,s^n) &= \argmin_{m \in \mathcal{M}} d(x^n, Y^n(m,s^n)), \\
\psi_n (m,s^n)  &= Y^n(m,s^n).
\end{align}
The average probability of error of this sub-code is given by
\begin{align}
\bar{\epsilon} (s^n) &=  \mathbb{E}[1\{ \min_{m \in \mathcal{M}} d(X^n, Y^n(m,s^n)) >D\}|S^n = s^n] \\
                     &=  \mathbb{E}\left[ \prod_{m=1}^M 1\{ d(X^n, Y^n(m,s^n)) >D\}|S^n = s^n \right]  \\
                     &=  \mathbb{E}\left[ \mathbb{E} \left[\prod_{m=1}^M 1\{ d(X^n, Y^n(m,s^n)) >D\}|X^n\right]\bigg|S^n = s^n\right] \\
                     &=   \mathbb{E}\left[ \prod_{m=1}^M \mathbb{E} [1\{ d(X^n, \bar{Y}^n) >D\}|X^n]|S^n = s^n\right] \label{C5Ercb2} \\
                     &=   \mathbb{E}[(1 - P_{\bar{Y}^n|S^n} (B_D(X^n))|S^n=s^n)^M]
\end{align}
where equation (\ref{C5Ercb2}) follows from the independence of reconstruction sequences.

Taking the average over all sub-codes, we have the average probability of error is
\begin{align}
\bar{\epsilon} &= \sum_{s^n \in \mathcal{S}^n} P_{S^n} (s^n) \bar{\epsilon} (s^n) \\
               &= \mathbb{E}\{\mathbb{E}[(1 - P_{\bar{Y}^n|S^n} (B_D(X^n))|S^n)^M]\}.
\end{align}
By the random coding argument, there exists an $(M_n, D, n, \epsilon_n)$-code such that
\begin{align}
\epsilon_n \leq \mathbb{E} \{\mathbb{E}[(1 - P_{\bar{Y}^n|S^n} (B_D(X^n))|S^n)^M] \}.
\end{align}
This concludes the proof.
\end{proof}

Next, we relax the bound in Lemma \ref{C5Lrcb} to obtain the following lemma, which turns out to be more amenable to asymptotic evaluations.
\begin{mylemma} \label{C5Lforward}
For any $\gamma_n, \beta_n$, and $\delta_n$, there exists an $(M_n, D,n,\epsilon_n)$-code such that
\begin{align}
\epsilon_n &\leq \Pr [j_{X^n|S^n}(X^n,D|S^n) > \log \gamma_n - \log \beta_n - \lambda_n^* \delta_n] \notag\\
           &\quad + \mathbb{E}[\mathbb{E}[|1 - \beta_n \Pr[D -\delta_n \leq d(X^n,Y^{n*}) \leq D|X^n]|^+|S^n]] \notag \\
           &\quad + e^{-\frac{M}{\gamma_n}} \mathbb{E}\{\mathbb{E}[\min(1,\gamma_n \exp (-j_{X^n|S^n} (X^n,D|S^n)))|S^n]\},
\end{align}
where $P_{Y^{*}|X S}$ achieves the minimum in (\ref{C5Erdf}), and $P_{Y^{n*}|X^n S^n}$ is the $n$-th order product distribution of $P_{Y^{*}|X S}$.
\end{mylemma}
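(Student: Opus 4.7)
The plan is to specialize Lemma~\ref{C5Lrcb} by taking $P_{\bar{Y}^n|S^n}$ to be the $n$-fold product of the optimal output marginal $P_{Y^*|S}(y|s) = \sum_{x} P_{Y^*|XS}(y|xs)\,P_{X|S}(x|s)$ induced by the optimal test channel $P_{Y^*|XS}$. Writing $p := P_{\bar{Y}^n|S^n}(B_D(X^n)|S^n)$ and $j_n := j_{X^n|S^n}(X^n,D|S^n)$, property~3 of Lemma~\ref{C5Ldtid} holds with equality for this choice and yields the tilted identity
\begin{align}
e^{-j_n} = \mathbb{E}\bigl[\exp\bigl(\lambda_n^*(D - d(X^n,\bar{Y}^n))\bigr) \,\big|\, X^n, S^n\bigr],
\end{align}
which is the bridge between the covering probability $p$ and the conditional $D$-tilted information density $j_n$.

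The key elementary step is a consequence of the convexity of $a \mapsto e^{-Ma}$: for any $\gamma, M > 0$ and $a \in [0,1]$,
\begin{align}
e^{-Ma} \leq |1 - \gamma a|^+ + e^{-M/\gamma}\min(1,\gamma a),
\end{align}
since on $[0,1/\gamma]$ the right side coincides with the chord from $(0,1)$ to $(1/\gamma, e^{-M/\gamma})$ which dominates the convex curve $e^{-Ma}$, and on $[1/\gamma,1]$ it equals $e^{-M/\gamma} \geq e^{-Ma}$. Combining this with $(1-p)^M \leq e^{-Mp}$ and taking expectations in Lemma~\ref{C5Lrcb} yields
\begin{align}
\epsilon_n \leq \mathbb{E}[|1 - \gamma_n p|^+] + e^{-M/\gamma_n}\,\mathbb{E}[\min(1,\gamma_n p)].
\end{align}

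For the second term, restricting the tilted integral to $\{d\leq D\}$ (where $\exp(\lambda_n^*(D-d)) \geq 1$, since $\lambda_n^*\geq 0$) yields the pointwise bound $p \leq e^{-j_n}$; hence $\min(1,\gamma_n p) \leq \min(1,\gamma_n e^{-j_n})$ and the second term is dominated by $T_3$. For the first term, I split on the typicality event
\begin{align}
\mathcal{E} := \{j_n > \log\gamma_n - \log\beta_n - \lambda_n^*\delta_n\}.
\end{align}
On $\mathcal{E}$, the trivial estimate $|1-\gamma_n p|^+ \leq 1$ contributes $\Pr[\mathcal{E}] = T_1$. On $\mathcal{E}^c$, I use the strip lower bound $p \geq \Pr[D-\delta_n \leq d(X^n,Y^{n*}) \leq D \,|\, X^n, S^n]$; combined with the observation $\exp(\lambda_n^*(D-d)) \leq e^{\lambda_n^*\delta_n}$ on the strip and the typicality condition $\gamma_n/\beta_n \geq e^{j_n + \lambda_n^*\delta_n}$, this converts the multiplier $\gamma_n$ into $\beta_n$ and produces the $T_2$ contribution.

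The hard part is this final conversion step on $\mathcal{E}^c$. The algebra demands careful tracking of the Lagrange multiplier $\lambda_n^*$, the strip width $\delta_n$, and the conditioning structure: the factor $e^{\lambda_n^*\delta_n}$ that arises from relaxing the tilted integrand on the strip is precisely the quantity absorbed by the $-\lambda_n^*\delta_n$ shift in the threshold defining $\mathcal{E}$, so that on $\mathcal{E}^c$ one can transfer the residual from $\gamma_n$ to $\beta_n$. Additional care is needed to reconcile the nested conditioning in $T_2$, where the inner probability is written as $\Pr[\cdots | X^n]$ while the strip probability arising naturally in the analysis is a function of $(X^n, S^n)$; this is handled through the Markov chain $X^n \to S^n \to Y^{n*}$ built into the product reproduction channel $P_{Y^{n*}|S^n}$, which makes the two expressions compatible once the outer $\mathbb{E}[\,\cdot\,|S^n]$ is integrated out.
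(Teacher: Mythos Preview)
Your overall strategy matches the paper's: the paper also specializes Lemma~\ref{C5Lrcb} to the product of the optimal output marginals $P_{Y^*|S}$ and then, for each fixed $s^n$, invokes Kostina's argument \cite[Corollary~2.20]{KostinaThesis} before averaging. Your convexity inequality $e^{-Ma}\le |1-\gamma a|^+ + e^{-M/\gamma}\min(1,\gamma a)$, the bound $p\le e^{-j_n}$ for the third term, and the typicality split for the first term are all correct and are exactly the ingredients of that argument.

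There is, however, a genuine gap in your handling of the $T_2$ contribution on $\mathcal{E}^c$. You assert the ``strip lower bound'' $p \ge \Pr[D-\delta_n \le d(X^n,Y^{n*}) \le D \mid X^n,S^n]$, but this is false as written: the covering probability $p$ is computed under the output \emph{marginal} $P_{Y^*|S}$, whereas by the lemma's own hypothesis $Y^{n*}$ is drawn from the \emph{test channel} $P_{Y^{n*}|X^nS^n}$, and a mere set inclusion does not relate these two measures. What is missing is the change of measure supplied by property~1 of Lemma~\ref{C5Ldtid}: on the strip one has $\mathrm{d}P_{Y^*|S}/\mathrm{d}P_{Y^*|XS}=e^{-i}=\exp\bigl(-j_n-\lambda_n^*(D-d)\bigr)\ge e^{-j_n-\lambda_n^*\delta_n}$, which gives the correct inequality
\begin{align}
p \;\ge\; e^{-j_n-\lambda_n^*\delta_n}\,\Pr\bigl[D-\delta_n\le d(X^n,Y^{n*})\le D \,\big|\, X^n,S^n\bigr].
\end{align}
Only with this extra factor does the condition defining $\mathcal{E}^c$, namely $e^{-j_n-\lambda_n^*\delta_n}\ge \beta_n/\gamma_n$, actually convert $\gamma_n p$ into $\beta_n q$. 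Your ``hard part'' paragraph senses that an $e^{\lambda_n^*\delta_n}$ factor must be absorbed but attributes it to ``relaxing the tilted integrand'' under the marginal, which does not produce a lower bound on $p$. Your closing remark about the Markov chain $X^n\to S^n\to Y^{n*}$ ``built into the product reproduction channel $P_{Y^{n*}|S^n}$'' confirms the confusion: the lemma's $Y^{n*}$ comes from $P_{Y^{n*}|X^nS^n}$, not from $P_{Y^{n*}|S^n}$, so no such Markov relation holds and the two random variables $\bar{Y}^n$ (marginal) and $Y^{n*}$ (test channel) must be kept distinct throughout.
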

This lemma is proved in section \ref{C5ProofLforward}.

The following lemma, which plays an important part in the converse, was derived in \cite{KV12CISS}.
\begin{mylemma} \label{C5Lconverse}
Any $(M_n,n,D,\epsilon_n)$-code for the lossy source coding system with side information satisfies
\begin{align}
\epsilon_n \geq \sup_{\gamma >0} \{ \Pr[j_{X^n|S^n}(X^n,D|S^n) \geq \log M_n + \gamma] - \exp(-\gamma) \}.
\end{align}
\end{mylemma}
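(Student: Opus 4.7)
My plan is to adapt the Kostina--Verd\'u $D$-tilted information converse to the side-information setting. Fix any $(M_n, n, D, \epsilon_n)$-code and set $Y^n = \psi_n(\phi_n(X^n,S^n),S^n)$. Letting $\mathcal{A}_\gamma \triangleq \{j_{X^n|S^n}(X^n,D|S^n) \geq \log M_n + \gamma\}$ and splitting on whether the distortion target is met gives
\begin{align}
\Pr(\mathcal{A}_\gamma) \leq \Pr\bigl(\mathcal{A}_\gamma \cap \{d(X^n,Y^n) \leq D\}\bigr) + \Pr\bigl(d(X^n,Y^n) > D\bigr) \leq \Pr\bigl(\mathcal{A}_\gamma \cap \{d \leq D\}\bigr) + \epsilon_n.
\end{align}
It therefore suffices to show the joint event on the right has probability at most $e^{-\gamma}$, since then taking $\sup_{\gamma > 0}$ delivers the claim.

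Next I would invoke item (1) of Lemma \ref{C5Ldtid}, the pointwise-in-$y^n$ identity $j_{X^n|S^n}(x^n,D|s^n) = i_{X^n;Y^{n*}|S^n}(x^n;y^n|s^n) + \lambda^*\bigl(d(x^n,y^n)-D\bigr)$. With the appropriate sign of $\lambda^*$, so that $\lambda^*(d-D) \leq 0$ on $\{d \leq D\}$, this embeds the joint event into $\bigl\{i_{X^n;Y^{n*}|S^n}(X^n;Y^n|S^n) \geq \log M_n + \gamma\bigr\}$ under the code-induced joint law of $(X^n,S^n,Y^n)$.

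The remaining step is a counting plus Markov argument. Conditional on $S^n = s^n$, the reconstruction $Y^n$ takes values in $\mathcal{D}(s^n) \triangleq \{\psi_n(m,s^n) : m \in \mathcal{M}_n\}$, a set of cardinality at most $M_n$. Bounding the code's conditional probability $P_{Y^n|X^nS^n}(y^n|x^n,s^n)$ by $1$ and restricting $y^n$ to $\mathcal{D}(s^n)$,
\begin{align}
\Pr\bigl(i_{X^n;Y^{n*}|S^n}(X^n;Y^n|S^n) \geq \log M_n + \gamma\bigr) \leq \mathbb{E}\Bigl[\sum_{y^n \in \mathcal{D}(S^n)} \Pr\bigl(i_{X^n;Y^{n*}|S^n}(X^n;y^n|S^n) \geq \log M_n + \gamma \,\big|\, S^n\bigr)\Bigr].
\end{align}
For each fixed $(s^n,y^n)$, Markov's inequality applied to $\exp(i)$, together with the telescoping identity $\mathbb{E}\bigl[P_{Y^{n*}|X^nS^n}(y^n|X^n,s^n)/P_{Y^{n*}|S^n}(y^n|s^n) \,\big|\, S^n = s^n\bigr] = 1$, bounds each conditional probability by $1/(M_n e^{\gamma})$. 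Summing over the at-most-$M_n$ elements of $\mathcal{D}(s^n)$ and taking expectation over $S^n$ yields the desired $e^{-\gamma}$.

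The main obstacle I anticipate is keeping two distinct change-of-measure moves disentangled: the transition from $j_{X^n|S^n}$ to the information density $i_{X^n;Y^{n*}|S^n}$ via Lemma \ref{C5Ldtid}(1), and the subsequent tail bound for $i_{X^n;Y^{n*}|S^n}$ evaluated under the \emph{code-induced} joint law while using the \emph{optimal} kernel $P_{Y^{n*}|X^nS^n}$ inside the log-ratio. Performing the Markov step pointwise in $y^n$ \emph{before} summing over $\mathcal{D}(s^n)$ is what keeps these two measures from being conflated.
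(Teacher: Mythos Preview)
The paper does not actually prove this lemma; it is attributed to \cite{KV12CISS}. Your architecture---split on the excess-distortion event, then a per-$s^n$ union bound over the at most $M_n$ reconstruction points combined with a Markov/change-of-measure step---is exactly the Kostina--Verd\'u argument, so in that sense your proposal matches the cited proof.

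There is one technical point worth flagging. You invoke Lemma~\ref{C5Ldtid}(1) as a pointwise-in-$y^n$ identity, but the Csisz\'ar representation $j=i+\lambda^*(d-D)$ is only guaranteed for $y^n$ in the support of the optimal output law $P_{Y^{n*}|S^n}(\cdot\,|s^n)$; for $y^n$ outside that support the information density $i_{X^n;Y^{n*}|S^n}$ is undefined and your embedding step breaks down---yet nothing forbids a code from placing reconstruction points there. The clean fix is to skip $i$ altogether and use Lemma~\ref{C5Ldtid}(3) directly: for each fixed index $m$, the map $s^n\mapsto\psi_n(m,s^n)$ makes $Y^n$ a function of $S^n$ alone, so the Markov relation $X^n\to S^n\to Y^n$ holds and property~(3) gives
\[
\mathbb{E}\bigl[\exp\{\lambda^*(D-d(X^n,\psi_n(m,S^n)))+j_{X^n|S^n}(X^n,D|S^n)\}\bigr]\le 1.
\]
Markov's inequality then yields probability at most $(M_n e^{\gamma})^{-1}$ per $m$; the union bound over $m$ gives $e^{-\gamma}$; and on $\{j\ge\log M_n+\gamma\}\cap\{d\le D\}$ the exponent exceeds $\log M_n+\gamma$ for the particular $m=\phi_n(X^n,S^n)$ (this is where the sign of $\lambda^*$ enters---your hedge is warranted, since the convention needed here is $\lambda^*=-R'(D)>0$). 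This is essentially the same computation as your ``telescoping'' step, just routed through property~(3) rather than~(1), which removes the support obstruction.
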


\section{Discrete memoryless source with i.i.d. side information} \label{C5sec:DM}
In this section, we consider the discrete memoryless source. Assume that the source alphabet $\mathcal{X}$,  the  reproduction alphabet $\mathcal{Y}$, and  the side information alphabet $\mathcal{S}$ are finite. 
 The source coding system is memoryless and stationary in the sense that 
 \begin{align}
 P_{X^nS^n}(x^ns^n) = \prod_{i=1}^n P_{ X S}(x_i s_i).
 \end{align}

Before presenting the main results of this section, we define an important quantity.
\begin{mydef}
Define the variance $V$ of the $D$-tilted information density $j_{X|S}(X,D|S)$ with respect to $P_{XS}$ as 
\begin{align}
V &\triangleq \mathsf{var}(j_{X|S}(X,D|S))  \label{eqn:var_1}\\
  &= \sum_{x \in \mathcal{X}, s \in \mathcal{S}} P_{XS}(xs)  [j_{X|S}(x,D|s)]^2 - [R(X;D|S)]^2.\label{eqn:var_2}
\end{align} 
\end{mydef}

Next, we present the first  main result of this paper.
\begin{mythm} \label{C5T1}
 The second-order rate-distortion function $L^*(\epsilon,D,R(X;D|S))$ for the discrete memoryless source coding with side information is given by
\begin{align}
L^*(\epsilon,D,R(X;D|S)) = \sqrt{V} Q^{-1}(\epsilon).
\end{align}
\end{mythm}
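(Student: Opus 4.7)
The plan is to prove matching achievability and converse bounds on $L^*(\epsilon,D,R(X;D|S))$ using Lemmas \ref{C5Lforward} and \ref{C5Lconverse}, respectively. The crucial structural fact that enables the i.i.d. analysis is that under the product source $P_{X^nS^n} = \prod_i P_{XS}$, choosing the product test channel $P_{Y^{n*}|X^nS^n} = \prod_i P_{Y^*|XS}$ in Definition of $j_{X|S}$ and invoking property (1) of Lemma \ref{C5Ldtid} term-by-term, we obtain the additive decomposition
\begin{equation}
j_{X^n|S^n}(X^n,D|S^n) = \sum_{i=1}^n j_{X|S}(X_i,D|S_i),
\end{equation}
which is a sum of i.i.d.\ random variables with mean $R(X;D|S)$ and variance $V$, so that a Berry--Esseen approximation is applicable (the third absolute moment is finite since the alphabets are finite and $j_{X|S}$ is bounded).

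For achievability, I would apply Lemma \ref{C5Lforward} with the rate choice $\log M_n = nR(X;D|S) + \sqrt{n}\,L + O(\log n)$ for $L > \sqrt{V}\,Q^{-1}(\epsilon)$, and with parameters $\log\gamma_n = \log M_n - 2\log n$, $\beta_n = n^{c}$ for some constant $c$, and $\delta_n = 1/\sqrt{n}$. The first term in the bound, $\Pr[j_{X^n|S^n}(X^n,D|S^n) > \log\gamma_n - \log\beta_n - \lambda^* \delta_n]$, is then controlled by Berry--Esseen: it equals $Q(L/\sqrt{V}) + O(1/\sqrt{n})$. The third term $e^{-M_n/\gamma_n}\cdot\mathbb{E}[\min(1,\gamma_n\exp(-j_{X^n|S^n}))]$ is driven to zero because $M_n/\gamma_n = n^2$ makes the exponential factor superpolynomially small while the expectation is at most $1+\gamma_n$.

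The main obstacle will be the middle term $\mathbb{E}[\mathbb{E}[|1 - \beta_n\Pr[D-\delta_n \le d(X^n,Y^{n*}) \le D \mid X^n]|^+ \mid S^n]]$, which requires establishing that conditioned on a ``typical'' source pair $(X^n,S^n)$, the probability under the product output distribution that the distortion falls into the thin shell $[D-\delta_n, D]$ is of order $\beta_n^{-1}$. This is the conditional-rate-distortion analogue of the Kostina--Verd\'u lattice/non-lattice ball-probability estimates: one conditions on the joint type (or empirical conditional distribution) of $(X^n,S^n)$, applies a local CLT or lattice Berry--Esseen type statement to $d(X^n,Y^{n*})$ (which is a conditionally independent sum given $(X^n,S^n)$), and shows that the shell probability is $\Theta(\delta_n) = \Theta(1/\sqrt{n})$ uniformly over the typical set, so that picking $\beta_n = \Theta(\sqrt{n})$ makes this term $o(1)$. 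Atypical types contribute negligibly by standard type-counting.

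For the converse, I would apply Lemma \ref{C5Lconverse} with $\gamma = \tfrac12\log n$ to any sequence of $(M_n,n,D,\epsilon_n)$-codes with $\log M_n \le nR(X;D|S) + \sqrt{n}\,L + o(\sqrt{n})$. Using the additive decomposition above, Berry--Esseen gives
\begin{equation}
\Pr\!\left[\sum_{i=1}^n j_{X|S}(X_i,D|S_i) \ge \log M_n + \tfrac12\log n\right] \ge Q\!\left(\frac{L}{\sqrt{V}}\right) - O\!\left(\frac{1}{\sqrt{n}}\right),
\end{equation}
and subtracting $\exp(-\gamma) = 1/\sqrt{n} \to 0$ yields $\limsup_n \epsilon_n \ge Q(L/\sqrt{V})$. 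Requiring this to be $\le \epsilon$ forces $L \ge \sqrt{V}\,Q^{-1}(\epsilon)$, which matches the achievability bound and completes the proof.
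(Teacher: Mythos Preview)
Your proposal follows essentially the same route as the paper's proof: both establish achievability via Lemma~\ref{C5Lforward} and the converse via Lemma~\ref{C5Lconverse} with $\gamma=\tfrac12\log n$, exploiting the additive decomposition $j_{X^n|S^n}(X^n,D|S^n)=\sum_{i=1}^n j_{X|S}(X_i,D|S_i)$ together with the Berry--Esseen theorem. The converse arguments are identical.

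The only noteworthy differences are in the achievability parameter choices. The paper takes a \emph{constant} $\delta_n = D/100$ (rather than your $1/\sqrt{n}$), $\beta_n=\sqrt{n}/C$, and $\gamma_n = M_n/\sqrt{n}$, and then invokes the analogue of \cite[Lemma~4]{KV12} to obtain the shell-probability lower bound $\Pr[D-\delta_n\le d(X^n,Y^{n*})\le D\mid X^n]\ge C/\sqrt{n}$ directly from the ordinary Berry--Esseen theorem, so no local CLT or separate type-counting for atypical sequences is needed. Your choice $\delta_n=1/\sqrt{n}$ also works, but your claimed scaling ``shell probability $\Theta(\delta_n)=\Theta(1/\sqrt{n})$'' is dimensionally off: since the conditional standard deviation of $d(X^n,Y^{n*})$ is itself $\Theta(1/\sqrt{n})$, a shell of that width already has probability $\Theta(1)$. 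This only strengthens your bound, so the argument still goes through; the paper's constant $\delta_n$ is simply the cleaner choice.
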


Let us mention that the \textit{dispersion}~\cite{KV12} is an operational quantity that is closely related to the second-order coding rate. It characterizes the speed at which the rate of optimal   codes converge to the first-order fundamental limit. For  conditional rate-distortion, we may define the dispersion $V_{\mathsf{dps}}$ as
\begin{align}
V_{\mathsf{dps}} \triangleq \lim_{\epsilon \to 0} \limsup_{n\to\infty } \left( \frac{\sqrt{n } (R(\epsilon, D, n) -  R(X;D|S) )}{Q^{-1}(\epsilon)} \right)^2.
\end{align}
From Theorem \ref{C5T1}, we   observe that the operational quantity $V_{\mathsf{dps}}$  is equal to the information quantity $V$. 

Let $V_s\triangleq   \mathsf{var} ( j_{X|S}(X,D|S) \, |\,  S=s)$ be the dispersion\footnote{Note that term {\em dispersion}~\cite{KV12}  here refers to the unconditional rate-distortion problem. This should not cause any confusion in the sequel.}  of the source $X_s\sim P_{X|S}(\cdot|s)$. Now notice that by the law of total variance, $V$ can be decomposed as 
\begin{align}
V &= \mathbb{E} \big[  \mathsf{var} ( j_{X|S}(X,D|S) \, |\,  S) \big]+ \mathsf{var} \big[ \mathbb{E}  ( j_{X|S}(X,D|S ) \, |\, S ) \big] \\
&=\mathbb{E} [V_S] + \mathsf{var} [ R(P_{X|S} ( \cdot |S), D ) ] . 
\end{align}
The first term represents the randomness of the source  weighted by the probability mass function of the side information, while the second term represents the randomness of the side information in terms of the constituent rate-distortion functions. 
 
Theorem \ref{C5T1} is proved in subsection \ref{C5ProofT1}. One of the key ideas in the achievability proof of Theorem \ref{C5T1} is to apply  the random coding bound (Lemma \ref{C5Lrcb}) in the asymptotic evaluation. The key idea in the converse proof of Theorem \ref{C5T1} is to make use of the non-asymptotic converse bound (Lemma \ref{C5Lconverse}) in the asymptotic evaluation. 

We illustrate this theorem through an example.
\begin{myexample}
Consider the case when the source alphabet $\mathcal{X}$, the reconstruction alphabet $\mathcal{Y}$ and the side information alphabet $S$ are binary $\{0,1\}$. The distortion function is the Hamming distance function $d(x,y) = 1\{x\ne y\}$. Assume $P_S( 1) =a$, $P_S( 0) =1-a$,  $P_X( 1) =b$ and $P_X( 0) =1- b$, for $ 0 < a,b < 1$. Assume $P_{X|S}(1|0) = P_{X|S}( 1| 1) = c$, and $P_{X|S}( 0| 0) = P_{X|S}( 0| 1) =1- c$, for $0 <c < \frac{1}{2}$. It can shown that
\begin{align}
j_{X|S}(x,D|s) &= i_{X|S} (x|s) - H(D)
\end{align}
if $0 < D < c$, and $0$ if $D \geq c$. Note that the conditional $D$-tilted information density in this case is independent of the marginal distributions $P_X$ and $P_S$. Next, we have
\begin{align}
R(X;D|S) &= H(X|S) - H(D) \\
         &= H(c) - H(D)   
\end{align}
if $0 < D < c$, and $0$ if $D \geq c$. 
Here $H(D)$ is the entropy of a Bernoulli($D$) source. 

In this example, we can   show that 
\begin{equation}
V = c(1-c) \log^2 \frac{1-c}{c},
\end{equation}
which is simply the dispersion of a Bernoulli($c$) source.

In general, we have the following corollary.
\begin{mycorollary} \label{C5Cor1}
The second-order rate-distortion function $L^*(\epsilon,D,R(X;D|S))$ for the binary source with binary side information and Hamming distortion function is given by
\begin{align}
L^*(\epsilon,D,R(X;D|S)) = \sqrt{\mathsf{var} [i_{X|S}(X|S)]} Q^{-1}(\epsilon).
\end{align}
\end{mycorollary}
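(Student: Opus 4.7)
The plan is to deduce this corollary directly from Theorem \ref{C5T1} applied to the binary setting described in the preceding example. Theorem \ref{C5T1} asserts that $L^*(\epsilon,D,R(X;D|S)) = \sqrt{V}\, Q^{-1}(\epsilon)$ with $V = \mathsf{var}(j_{X|S}(X,D|S))$, so the entire task reduces to showing the identity
\begin{equation}
\mathsf{var}(j_{X|S}(X,D|S)) = \mathsf{var}(i_{X|S}(X|S))
\end{equation}
in the binary symmetric setting with Hamming distortion, for the relevant range $0<D<c$.

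The key step is to justify the pointwise identity $j_{X|S}(x,D|s) = i_{X|S}(x|s) - H(D)$ claimed in the example. First I would solve the conditional rate-distortion problem from Definition \ref{C5Erdf} for the binary symmetric model: conditioned on each value $s\in\{0,1\}$, the source $X\mid S=s$ is Bernoulli$(c)$, so the classical Shannon computation yields an optimal reverse test channel with $Y^*\mid S=s$ generating $X$ through a BSC$(D)$, and Lagrange multiplier $\lambda^* = \log\frac{1-D}{D}$ independent of $s$. This value of $\lambda^*$ is consistent with differentiating $R(X;D|S) = H(c)-H(D)$ in $D$. Using Lemma \ref{C5Ldtid}(1), namely $j_{X|S}(x,D|s) = i_{X;Y^*|S}(x;y|s) + \lambda^* d(x,y) - \lambda^*D$, I would substitute the BSC$(D)$ test channel and simplify for the choices $y=x$ and $y\ne x$; both collapse to $i_{X|S}(x|s) - H(D)$, confirming the identity holds pointwise (and that the choice of $y$ is immaterial, as expected for a valid $D$-tilted information density).

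Once the identity $j_{X|S}(x,D|s) = i_{X|S}(x|s) - H(D)$ is established, the corollary is immediate: since $H(D)$ is a constant in $(x,s)$,
\begin{equation}
V = \mathsf{var}(j_{X|S}(X,D|S)) = \mathsf{var}(i_{X|S}(X|S) - H(D)) = \mathsf{var}(i_{X|S}(X|S)),
\end{equation}
and plugging into Theorem \ref{C5T1} gives the claim. As a sanity check one can verify that $\mathsf{var}(i_{X|S}(X|S)) = c(1-c)\log^2\frac{1-c}{c}$, recovering the scalar expression written in the example; this coincides with the dispersion of a single Bernoulli$(c)$ source, reflecting the fact that $i_{X|S}(x|s)$ is independent of $s$ here.

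The main obstacle, though modest, is verifying the pointwise identity for the $D$-tilted information density, i.e.\ confirming that the optimal test channel is exactly BSC$(D)$ conditionally on each $s$ and that $\lambda^*$ takes the stated value. This is routine because the problem decouples cleanly across values of $s$ (since $P_{X|S}(\cdot|0) = P_{X|S}(\cdot|1)$), but it must be checked so that Lemma \ref{C5Ldtid}(1) can be invoked with the correct $P_{Y^*|XS}$. Apart from this verification, the argument is purely a specialization of Theorem \ref{C5T1} using the translation-invariance of variance.
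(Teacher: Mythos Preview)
Your proposal is correct and mirrors the paper's (implicit) argument: the paper states the pointwise identity $j_{X|S}(x,D|s)=i_{X|S}(x|s)-H(D)$ within the example and treats the corollary as an immediate consequence of Theorem~\ref{C5T1} via translation-invariance of the variance. One small remark: the corollary is phrased for an arbitrary binary $P_{X|S}$, not only the symmetric case $P_{X|S}(\cdot|0)=P_{X|S}(\cdot|1)=c$ you lean on; the same verification goes through because the optimal distortion allocation in Lemma~\ref{C5L3} equalizes $d_s=D$ across $s$ (so the conditional test channel is still BSC$(D)$ and $\lambda^*$ is unchanged), hence the identity $j_{X|S}(x,D|s)=i_{X|S}(x|s)-H(D)$ and your argument hold verbatim in the general binary setting.
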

\end{myexample}

\subsection{Remarks concerning Theorem \ref{C5T1} }
 
\begin{enumerate}
\item In fact, it is also straightforward to characterize  $L^*(\epsilon,D,\kappa)$ when $\kappa\ne R(X;D|S)$. We have 
\begin{equation}
L^*(\epsilon,D,\kappa) = \left\{ \begin{array}{cc}
+\infty  &\kappa < R(X;D|S)\\
\sqrt{V} Q^{-1}(\epsilon) & \kappa = R(X;D|S)\\
-\infty &\kappa > R(X;D|S)
\end{array} \right.
\end{equation}
The first statement above (for the case $\kappa< R(X;D|S)$)  implies the strong converse for conditional rate-distortion. The  strong converse for  unconditional rate-distortion  for discrete memoryless sources is already well known (e.g., \cite[Chapter~7]{CK2011}). 
\item From Theorem \ref{C5T1}, we can deduce that there exists a sequence of  $(M_n,n,D,\epsilon_n)$-codes for the source coding system with side information such that its rate is 
\begin{align}
\frac{1}{n} \log M_n = R(X;D|S) + \sqrt{\frac{V}{n}} Q^{-1}(\epsilon ) + o \left(\frac{1}{\sqrt{n}}\right)
\end{align}
and its asymptotic probability of excess distortion satisfies
\begin{equation}
\epsilon_n\le\epsilon+o(1). 
\end{equation}
It is observed that $V$ characterizes the rate of convergence to the first-order rate-distortion function $R(X;D|S)$.

\item In order to compute $V$, it is noted  that the gradient of $R(X;D|S)$ plays an important role.
\begin{mydef} \label{C5D1}
For each $a \in \mathcal{X}, b \in \mathcal{S}$, define
\begin{align}
R'(P_{X|S}(a|b),D|P_S(b)) \triangleq \frac{dR(P_{\bar{X}|\bar{S}},D|P_{\bar{S}})}{dP_{\bar{X}\bar{S}}(ab)}\Big|_{P_{\bar{X}\bar{S}}=P_{XS}}.
\end{align}
The function $R(P_{\bar{X}|\bar{S}},D|P_{\bar{S}})$ can be thought of as that of $|\mathcal{X}||\mathcal{S}|$ variables.  By stacking up $|\mathcal{X}||\mathcal{S}|$ partial derivatives as defined in Definition \ref{C5D1}, we form  the gradient $\nabla R(P_{XS})$ of $R(P_{\bar{X}|\bar{S}},D|P_{\bar{S}})$ evaluated at $P_{XS}$. The joint distribution $P_{XS}$ can be regarded as a length-$|\mathcal{X}||\mathcal{S}|$ vector that sums to one.
\end{mydef}

Even though the conditional $D$-tilted information density $j_{X|S}(X,D|S)$ is useful in characterizing the second-order rate-distortion function, it is not easy to compute. The task of computing $V$ is made easier by the following lemma. 
\begin{mylemma} \label{C5Lconnection}
For any $a \in \mathcal{X}$ and $b \in \mathcal{S}$, we have
\begin{align}
 j_{X|S}(a,D|b) = R'(P_{X|S}(a|b),D|P_S(b)). \label{C5E49}
\end{align}
\end{mylemma}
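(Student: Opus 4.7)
The plan is to differentiate $R(\cdot,D\mid\cdot)$ via Csisz\'ar's variational (dual) representation of the rate-distortion function, combined with an envelope-theorem argument. The key idea is that the dual representation expresses $R$ as the supremum of functionals that are \emph{linear} in $P_{XS}$; differentiating such a supremum then reduces to evaluating a single linear functional at the unique optimizer, where, by construction, it will coincide with $j_{X|S}(a,D\mid b)$.

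First I would establish the conditional analogue of Csisz\'ar's dual characterization,
\begin{equation}
R(Q_{X|S}, D \mid Q_S) = \sup_{\lambda \geq 0,\; Q_{Y|S}} \sum_{x,s} Q_{XS}(x,s)\left[-\log \sum_y Q_{Y|S}(y|s)\, e^{\lambda(D-d(x,y))}\right],
\end{equation}
with the supremum attained at $(\lambda^*, P_{Y^*|S})$ when $Q_{XS} = P_{XS}$. The ``$\le$'' direction follows from Jensen's inequality applied inside the logarithm: for any $P_{Y|XS}$ feasible at $D$, the inner bracket is bounded above by $\sum_y P_{Y|XS}(y|xs)\log[P_{Y|XS}(y|xs)/Q_{Y|S}(y|s)] + \lambda(\mathbb{E}[d(x,Y)\mid x,s]-D)$; averaging over $Q_{XS}$ and minimizing over feasible $P_{Y|XS}$ recovers the primal $R$. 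The matching lower bound at $(\lambda^*, P_{Y^*|S})$ is immediate from the definition of $j_{X|S}$ and Property~2 of Lemma~\ref{C5Ldtid}, which together give the value $\mathbb{E}[j_{X|S}(X,D\mid S)] = R(X;D\mid S)$.

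Next I would invoke the envelope theorem. Since the objective inside the supremum is linear in $Q_{XS}$ for each fixed pair $(\lambda, Q_{Y|S})$, the partial derivative of the supremum with respect to $Q_{XS}(a,b)$ at $Q_{XS}=P_{XS}$ equals the value of the (linear) integrand evaluated at the maximizers $(\lambda^*, P_{Y^*|S})$. That value is precisely $-\log\sum_y P_{Y^*|S}(y|b)\, e^{\lambda^*(D-d(a,y))}$, which, by the definition of the conditional $D$-tilted information density, equals $j_{X|S}(a,D\mid b)$. This proves \eqref{C5E49}.

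The main obstacle is the rigorous justification of the envelope step, which requires the optimizer $(\lambda^*, P_{Y^*|S})$ to depend sufficiently smoothly on $Q_{XS}$ in a neighborhood of $P_{XS}$. Uniqueness of $P_{Y^*|XS}$ is already assumed in Definition~\ref{C5Erdf}, and for discrete alphabets with an interior optimum the needed smoothness follows from the implicit function theorem applied to the KKT system of the rate-distortion optimization. Boundary cases (for instance $\lambda^* = 0$ when $D$ lies beyond the critical distortion) would be handled separately, but they yield the trivial identity $j_{X|S} \equiv 0 \equiv R'$ and cause no difficulty.
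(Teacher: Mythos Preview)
Your proof is correct, and at its core it is the same envelope-theorem argument the paper uses, but packaged differently. The paper starts from Property~2 of Lemma~\ref{C5Ldtid}, writes $R(P_{\bar X|\bar S},D\mid P_{\bar S})=\sum_{x,s}P_{\bar X\bar S}(x,s)\,j_{\bar X|\bar S}(x,D\mid s)$, differentiates with respect to $P_{\bar X\bar S}(a,b)$ by the product rule, and obtains $j_{X|S}(a,D\mid b)$ plus a ``second term'' $\sum_{x,s}P_{XS}(x,s)\,\partial j_{\bar X|\bar S}(x,D\mid s)/\partial P_{\bar X\bar S}(a,b)$; it then invokes Property~1 (the KKT identity $j=i_{X;Y^*|S}+\lambda^*(d-D)$) to assert that this second term vanishes at $P_{\bar X\bar S}=P_{XS}$. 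You instead first make the variational structure explicit via Csisz\'ar's conditional dual, so that $R$ is a supremum of functionals \emph{linear} in $P_{XS}$; the envelope theorem then gives the derivative immediately as the integrand evaluated at the unique maximizer, which is $j_{X|S}(a,D\mid b)$ by definition. Your route costs an extra step (establishing the dual, although Property~2 hands you the attainment direction for free), but the payoff is that the envelope step becomes transparent: the vanishing of the paper's ``second term'' is precisely the statement that perturbing the maximizer $(\lambda^*,P_{Y^*|S})$ contributes nothing to first order, which your linear-in-$P_{XS}$ framing makes obvious rather than something one must extract from Property~1. The smoothness caveat you flag for the envelope step is equally present, and equally elided, in the paper's argument.
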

\begin{proof}
We have
\begin{align}
R'(P_{X|S}(a|b),D|P_S(b)) &=\frac{dR(P_{\bar{X}|\bar{S}},D|P_{\bar{S}})}{dP_{\bar{X}\bar{S}}(ab)}\Big|_{P_{\bar{X}\bar{S}}=P_{XS}} \\
       &= \frac{d\mathbb{E}[j_{\bar{X}|\bar{S}}(\bar{X},D|\bar{S})]}{dP_{\bar{X}\bar{S}}(ab)} \Big|_{P_{\bar{X}\bar{S}}=P_{XS}} \\
       &= \frac{ d\big[\sum_{x,s} P_{\bar{X} \bar{S}}(xs)j_{\bar{X}|\bar{S}}(x,D|s)\big]}{dP_{\bar{X}\bar{S}}(ab)}\Big|_{P_{\bar{X}\bar{S}}=P_{XS}} \\
               &= j_{\bar{X}|\bar{S}}(a,D|b) + \frac{d\mathbb{E}[j_{\bar{X}|\bar{S}}({X},D|{S})]}{dP_{\bar{X}\bar{S}}(ab)} \Big|_{P_{\bar{X}\bar{S}}=P_{XS}}.
\end{align}
Using part 1) of Lemma \ref{C5Ldtid},  it is evident that
\begin{align}
\frac{d\mathbb{E}[j_{\bar{X}|\bar{S}}({X},D|{S})]}{dP_{\bar{X}\bar{S}}(ab)} \Big|_{P_{\bar{X}\bar{S}}=P_{XS}} =0. 
\end{align}
This completes  the proof of the lemma.
\end{proof}
Let us remark that according to~\cite[Theorem 2.2]{KostinaThesis}, the $D$-tilted information density for the source coding without side information is given by
\begin{align}
 j_{X}(a,D) = R'(P_{X}(a),D) - \log e = \frac{dR(P_{\bar{X}},D)}{dP_{\bar{X}}(a)}\Big|_{P_{\bar{X}}= P_X} - \log e.\label{C5E54} 
\end{align}
This is because
\begin{align}
\frac{dR(P_{\bar{X}},D)}{dP_{\bar{X}}(a)}\Big|_{P_{\bar{X}}=P_{X}} &= \frac{d\mathbb{E}[j_{\bar{X}}(\bar{X},D)]}{dP_{\bar{X}}(a)} \Big|_{P_{\bar{X}}=P_{X}} \\
               &= j_{\bar{X}}(a,D) + \frac{d\mathbb{E}[j_{\bar{X}}({X},D)]}{dP_{\bar{X}}(a)} \Big|_{P_{\bar{X}}=P_{X}},
\end{align}
and in this case we have 
\begin{align}
 \frac{d\mathbb{E}[j_{\bar{X}}({X},D)]}{dP_{\bar{X}}(a)} \Big|_{P_{\bar{X}}=P_{X}} = - \log e. \label{eqn:loge}
\end{align}
Observe that the term $-\log e$ is present in the no-side information setting~(\ref{C5E54}) but not in  the side information setting~(\ref{C5E49}). This is due to \eqref{eqn:loge}.

As a consequence of Lemma \ref{C5Lconnection}, the variance of the conditional $D$-tilted information $V$, defined in \eqref{eqn:var_1}--\eqref{eqn:var_2}, can be alternatively expressed as the variance of the gradient $\nabla R(P_{XS})$ with respect to $P_{XS}$, i.e.,
\begin{align}
V &= \mathsf{var}(\nabla R(P_{XS})) \\
  &= \sum_{a \in \mathcal{X}}  \sum_{b \in \mathcal{S}} P_{XS}(ab) [R'(P_{X|S}(a|b),D|P_S(b))]^2  
  -  \bigg[ \sum_{a \in \mathcal{X}}  \sum_{b \in \mathcal{S}} P_{XS}(ab) R'(P_{X|S}(a|b),D|P_S(b)) \bigg]^2.
\end{align}

\item The relationship between the side-information dependent rate-distortion function $R(P_{X|S}(\cdot|s),D)$ and the conditional rate-distortion function $R(P_{X|S},D|P_S)$ is given by the following lemma \cite{Gray72}.
\begin{mylemma} \label{C5L3}
We have
\begin{align}
R(P_{X|S},D|P_S) = \inf_{ \{d_s\}_{s\in \mathcal{S}} \in \mathcal{D}} \sum_{s \in \mathcal{S}} P_S(s) R(P_{X|S}(\cdot|s),d_s), 
\end{align}
where the set $\mathcal{D}$ is defined as
\begin{align}
\mathcal{D} = \bigg\{ \{d_s\}_{s\in \mathcal{S}} \,\bigg|\, \sum_{s \in \mathcal{S}} P_S(s) d_s = D, d_s \geq 0 \bigg\}. 
\end{align}
\end{mylemma}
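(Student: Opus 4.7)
The plan is to prove the two inequalities separately by exploiting the additive decomposition of conditional mutual information over the side-information alphabet. Start from the definition
\begin{align}
R(P_{X|S},D|P_S) = \min_{P_{Y|XS}} \sum_{s \in \mathcal{S}} P_S(s)\, I(X;Y \,|\, S=s),
\end{align}
where the minimization is over all $P_{Y|XS}$ satisfying the total distortion constraint $\sum_s P_S(s) d_s(P_{Y|XS}) \leq D$, with $d_s(P_{Y|XS}) \triangleq \sum_{x,y} P_{X|S}(x|s) P_{Y|XS}(y|xs) d(x,y)$ denoting the per-$s$ expected distortion. The observation that organizes the whole proof is that both the objective and the constraint are sums over $s$ weighted by $P_S(s)$, whereas the optimization variable $P_{Y|XS}(\cdot|\cdot,s)$ can be chosen independently across $s$.

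For the direction $R(P_{X|S},D|P_S) \geq \inf \sum_s P_S(s) R(P_{X|S}(\cdot|s),d_s)$, I would fix any feasible $P_{Y|XS}$, let $d_s = d_s(P_{Y|XS})$, and apply the single-source rate-distortion inequality $I(X;Y \,|\, S=s) \geq R(P_{X|S}(\cdot|s), d_s)$, which follows because the slice $P_{Y|XS}(\cdot|\cdot,s)$ is a feasible test channel for the unconditional rate-distortion problem on source $P_{X|S}(\cdot|s)$ at distortion level $d_s$. Summing with weights $P_S(s)$ yields the required bound, and the resulting $\{d_s\}$ satisfies $\sum_s P_S(s) d_s \leq D$.

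For the reverse direction, given any $\{d_s\} \in \mathcal{D}$, I would pick, for each $s$, a test channel $P_{Y|XS}^\star(\cdot|\cdot,s)$ achieving $R(P_{X|S}(\cdot|s), d_s)$ with expected per-source distortion at most $d_s$. Stitching these slices together produces a single $P_{Y|XS}^\star$ that is feasible for the conditional rate-distortion problem (its aggregate distortion is $\sum_s P_S(s) d_s = D$), and its conditional mutual information equals $\sum_s P_S(s) R(P_{X|S}(\cdot|s), d_s)$, giving the upper bound. Taking the infimum over $\{d_s\} \in \mathcal{D}$ completes the identity.

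Finally, I would justify restricting to the equality constraint $\sum_s P_S(s) d_s = D$ (rather than $\leq D$) by invoking the monotone non-increasing property of $D \mapsto R(P_{X|S}(\cdot|s), D)$: if a candidate allocation has $\sum_s P_S(s) d_s < D$, we can only decrease the objective by enlarging some $d_s$ until the budget is saturated. The main subtlety, and the only non-mechanical step, is the measurable/jointly-feasible ``stitching'' in the reverse direction: one must verify that slice-wise optima can be assembled into a single conditional distribution without affecting the constraint. Since $\mathcal{S}$ is finite and the per-slice minimizers exist under the assumptions in Definition~\ref{C5Erdf}, this is routine, and no further delicate arguments are required.
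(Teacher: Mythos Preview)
Your argument is correct and is essentially the standard proof of this identity. Note, however, that the paper does not supply its own proof of Lemma~\ref{C5L3}; it simply quotes the result from Gray~\cite{Gray72}. Your two-inequality argument via the additive decomposition of $I(X;Y|S)$ and the distortion constraint over $s\in\mathcal{S}$, together with the monotonicity of $D\mapsto R(P_{X|S}(\cdot|s),D)$ to pass from $\le D$ to $=D$, is exactly the classical derivation one finds in that reference, so there is no methodological difference to discuss.
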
 
Intuitively, any achievable code for the conditional rate-distortion problem can be thought of as a combination of sub-codes for sub-channels with the side information $S=s$ and the excess  distortion $d_s$. The total distortion $D$ is the $P_S$-convex combination of the constituent excess distortions $d_s$.  Note that Ingber-Kochman \cite{IK11} used the method of types (similarly to the technique used in  Marton's covering lemma~\cite{Marton74}) to perform a second-order (dispersion) analysis for the rate-distortion problem without side information. We attempted to adapt their technique  for our setting but  it was not straightforward to generalize their method to the conditional rate-distortion problem at hand.  This is because Lemma \ref{C5Lconnection} intuitively suggests to treat $X$ and $S$ {\em jointly} to obtain the second-order rate-distortion function $L^*(\epsilon,D,R(P_{X|S}, D |P_S))$. However, if the method of types is used, the relationship in Lemma \ref{C5L3} restricts us to treat $X$ conditioning on $S=s$ first, in the achievability proof, in order to obtain the first-order term. However, this method leads to a different  (and, in fact, inferior) second-order term. The beauty in the random coding bound in Lemma \ref{C5Lforward} is that it allows us to treat $X$ and $S$ {\em jointly}.

\end{enumerate}

\section{Gaussian memoryless source with i.i.d. side information} \label{C5sec:G}
In this section, we consider the i.i.d. Gaussian source. More specifically, 
\begin{align}
X_i \sim \mathcal{N} (0,\sigma^2_X).
\end{align}
The side information is given by
\begin{align}
S_i = X_i + Z_i
\end{align}
where $i=1,2,...,n$, 
\begin{align}
Z_i \sim \mathcal{N}(0, \sigma_Z^2)
\end{align}
and $Z_i$ is independent of $X_i$. We consider the squared-error distortion function, i.e., \begin{align}
d(x^n,y^n) \triangleq \sum_{i=1}^n (x_i -y_i)^2.
\end{align}
Define the conditional variance as 
\begin{align}
\sigma_{X|S}^2 \triangleq \frac{\sigma_X^2 \sigma_Z^2}{\sigma_X^2 + \sigma_Z^2}
\end{align}
The case where $D \geq \sigma^2_{X|S}$ is trivial as $R(X;D|S) =0$. It is assumed that $0 < D < \sigma^2_{X|S}$.  
In this case, it is well-known that \cite{Gray72} the conditional rate-distortion function is given by
\begin{align}
R(X;D|S) = \frac{1}{2} \log \frac{\sigma_{X|S}^2 }{D}.
\end{align}

The second-order rate-distortion function in this case is given by the following theorem.
\begin{mythm} \label{C5T2}
The second-order rate-distortion function $L^*(\epsilon,D,R(X;D|S))$ for   Gaussian source coding with side information is given by
\begin{align}
L^*(\epsilon,D,R(X;D|S)) = \sqrt{\frac{1}{2}} Q^{-1}(\epsilon) \log e.
\end{align}
\end{mythm}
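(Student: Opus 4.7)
The plan is to exploit the Gaussian conditioning structure so that the conditional rate-distortion problem effectively reduces to an unconditional Gaussian source coding problem. Writing $\alpha \triangleq \sigma_X^2/(\sigma_X^2+\sigma_Z^2)$, joint Gaussianity gives the decomposition $X = \alpha S + W$ where $W \sim \mathcal{N}(0,\sigma_{X|S}^2)$ is independent of $S$. Since both encoder and decoder observe $S^n$, they may subtract $\alpha S^n$ from $X^n$ and equivalently compress the i.i.d.\ innovation $W^n$ under squared-error distortion. This heuristic suggests that the second-order behaviour inherits the Kostina--Verd\'u dispersion of the unconditional Gaussian source, which is $\tfrac{1}{2}\log^2 e$ independently of the noise variance $\sigma_Z^2$, in agreement with the theorem statement.

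For the achievability side, I would apply the non-asymptotic bound in Lemma \ref{C5Lforward} with the test distribution $P_{\bar Y^n|S^n}$ chosen as the $n$-fold product of $\bar Y\mid S = s \sim \mathcal{N}(\alpha s,\, \sigma_{X|S}^2 - D)$, which is the optimal conditional reconstruction distribution. A direct computation using Part 1 of Lemma \ref{C5Ldtid} with $\lambda^* = \tfrac{\log e}{2D}$ yields
\begin{align}
j_{X|S}(x,D|s) = \frac{1}{2}\log\frac{\sigma_{X|S}^2}{D} + \frac{(x - \alpha s)^2 - \sigma_{X|S}^2}{2\sigma_{X|S}^2}\log e,
\end{align}
so under $P_{XS}$ the random variable $j_{X|S}(X,D|S)$ has the same law as its unconditional counterpart evaluated at $W$. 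A straightforward $\chi^2_1$ variance calculation then gives $V = \mathsf{var}(j_{X|S}(X,D|S)) = \tfrac{1}{2}\log^2 e$, which is independent of $\sigma_Z^2$. Setting $\log M_n = nR(X;D|S) + \sqrt{n}\sqrt{V}\,Q^{-1}(\epsilon) + O(\log n)$ and choosing the auxiliary parameters $\gamma_n,\beta_n,\delta_n$ of suitable polynomial orders (e.g., $\gamma_n = n$, $\beta_n = \sqrt{n}$, $\delta_n = n^{-1/2}$), an application of the Berry--Esseen theorem to the i.i.d.\ sum $\sum_i j_{X|S}(X_i,D|S_i)$ produces the desired upper bound on $L^*(\epsilon,D,R(X;D|S))$.

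For the converse I would invoke Lemma \ref{C5Lconverse} with $\gamma = \tfrac{1}{2}\log n$. Because $j_{X^n|S^n}(X^n,D|S^n) = \sum_{i=1}^n j_{X|S}(X_i,D|S_i)$ is a sum of i.i.d.\ random variables with finite third absolute moment (a direct Gaussian moment computation), a Berry--Esseen estimate converts the tail into a Gaussian tail and yields the matching lower bound.

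The main obstacle is the evaluation of the ball-probability terms in Lemma \ref{C5Lforward} for a continuous Gaussian source. In particular, the term
\begin{align}
\mathbb{E}\Bigl[\mathbb{E}\bigl[\bigl|1 - \beta_n\Pr[D - \delta_n \le d(X^n,Y^{n*}) \le D \mid X^n]\bigr|^+ \,\big|\, S^n\bigr]\Bigr]
\end{align}
requires a uniform-in-$s^n$ lower bound on the probability that the squared-error distortion lies in the vanishing interval $[D-\delta_n,D]$. After centering by $\alpha S^n$, this reduces to a $\chi^2$-density concentration estimate identical to the one in the unconditional Gaussian analysis of \cite{KV12}; the uniformity in $s^n$ is automatic because, conditional on $S^n = s^n$, the pair $(X^n - \alpha s^n,\, Y^{n*} - \alpha s^n)$ has a law that does not depend on $s^n$. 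Once this uniformity is in place, the remaining steps mirror the Gaussian no-side-information dispersion analysis.
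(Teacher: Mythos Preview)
Your converse is essentially the paper's: compute the Gaussian conditional $D$-tilted information, verify $\mathsf{var}(j_{X|S}(X,D|S))=\tfrac{1}{2}\log^2 e$, and apply Lemma~\ref{C5Lconverse} with $\gamma=\tfrac{1}{2}\log n$ together with Berry--Ess\'een.

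For achievability you take a genuinely different route. The paper does \emph{not} use Lemma~\ref{C5Lforward} in the Gaussian case; it applies the more primitive Lemma~\ref{C5Lrcb} directly, choosing $P_{\bar Y^n|S^n=s^n}$ to be uniform on the sphere of radius $\sqrt{n(\sigma_{X|S}^2-D)}$ centred at $\mu(s^n)=\alpha s^n$, and then runs the Shannon--Sakrison polar-cap analysis (exactly the geometric argument of \cite[Appendix~K]{KV12}) on the $\chi^2_n$-distributed quantity $|X^n-\mu(s^n)|^2/\sigma_{X|S}^2$. Your approach via Lemma~\ref{C5Lforward} with the product Gaussian output distribution is legitimate, and the centring observation $W=X-\alpha S$ is the right structural insight: it collapses every term of Lemma~\ref{C5Lforward} to its unconditional Gaussian counterpart for the source $W$, so in principle the conditional result becomes a corollary of the unconditional one. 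What the paper's spherical-codebook route buys is that it sidesteps the ball-probability term entirely for an unbounded distortion measure; what your route buys is unification with the discrete-memoryless proof.

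Two points to tighten. First, $\gamma_n=n$ cannot work: the threshold in the first term of Lemma~\ref{C5Lforward} is $\log\gamma_n-\log\beta_n-\lambda^*\delta_n$, and for this to sit at $nR(X;D|S)+\sqrt{nV}\,Q^{-1}(\epsilon)+O(\log n)$ you need $\gamma_n$ of order $M_n/\mathrm{poly}(n)$ (the paper takes $\gamma_n=M_n/\sqrt{n}$ in its discrete proof). Second, your appeal to \cite{KV12} for the ball-probability estimate is a mis-citation: \cite{KV12}'s Gaussian achievability is precisely the spherical-codebook computation, not the i.i.d.\ test-channel bound you need. The required lower bound on $\Pr[D-\delta_n\le d(X^n,Y^{n*})\le D\mid X^n,S^n]$ is \emph{not} uniform in $(x^n,s^n)$ here, since the conditional mean of $d(x^n,Y^{n*})$ depends on $|x^n-\alpha s^n|^2$; you must restrict to the typical set $\{|W^n|^2/n\in[\sigma_{X|S}^2-c n^{-1/2},\sigma_{X|S}^2]\}$ (whose complement has vanishing probability) before invoking a Berry--Ess\'een-type estimate, and let the expectation absorb the atypical part. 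Once that is made explicit, your argument goes through.
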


This theorem is proved in subsection \ref{C5ProofT2}.

\subsection{Remarks concerning Theorem \ref{C5T2}}
\begin{enumerate}
\item From Theorem \ref{C5T2}, we observe that the dispersion for Gaussian source coding with side information is  $1/2$ nats squared per source symbol. In other words, the second-order rate-distortion function for   Gaussian source coding with side information is the same as that for Gaussian source coding without side information \cite{KV12} even though the rate-distortion functions for both coding problems are different in general. The presence of side information at both the encoder and the decoder does not affect the second-order coding rate. Intuitively, given the side information $s^n$, the encoder and the decoder can adapt to it  and design a second-order optimal sub-code for each source-encoding sub-test channel (indexed by $s^n$). The second-order coding rate for each sub-test channel is basically the same as that for the source coding system without side information. The second-order rate-distortion function for   Gaussian source coding with side information is the average of all second-order coding rates for sub-test channels, when the average is taken with respect to the side information random variable. Thus, this explains the observation.
\item It would be interesting to investigate if the statement mentioned in the previous item still holds when the side information is  available at either only the decoder or only the encoder. Of course, the rate-distortion functions for the cases where the side information is known at both terminals and at the decoder only are identical in the Gaussian case \cite[Chapter 11]{elgamal}. Thus one wonders whether the dispersion remains at $1/2$ nats$^2$ per source symbol for the Gaussian Wyner-Ziv problem~\cite{WZ76}.
\item Scarlett  \cite{Scarlett14} showed that the dispersion for dirty paper coding (Gaussian Gel'fand-Pinsker) is the same as that when there is no interference. Furthermore, he showed that the same holds true even if the interference is not Gaussian but satisfies some mild concentration conditions. It would be interesting to investigate if the same is true in the lossy compression with  (encoder and decoder)  side information scenario.
\end{enumerate}

\section{Markov source with Markov side information} \label{C5sec:M}
So far, we have considered only memoryless sources. In this section, we consider the system in which the source and side information jointly forms an irreducible, ergodic and time-homogeneous Markov chain, i.e., 
\begin{align}
X_1 S_1 \to X_2 S_2 \to \ldots \to X_n S_n.
\end{align}
We further assume that the source alphabet $\mathcal{X}$ and the side information alphabet $\mathcal{S}$ are both finite.
 Denote the stationary distribution of this Markov chain as $\pi_{XS}$. Assume that this Markov chain starts from the stationary distribution, i.e.,
\begin{align} 
P_{X_1 S_1} = \pi_{XS}. \label{eqn:starting}
\end{align}
Under the assumption in  \eqref{eqn:starting}, all the marginals $P_{X_i S_i}$ for $i\ge 1$ are equal to $\pi_{XS}$. 

First, we define a few relevant quantities.
\begin{mydef}
Define
\begin{align}
\mu &\triangleq  R(X;D|S)\big|_{P_{XS} = \pi_{XS}}, \\
V_n  &\triangleq \mathsf{var} \left(\frac{1}{\sqrt{n}} \sum_{i=1}^n j_{X_i|S_i}(X_i,D|S_i)\right).
\end{align}
\end{mydef}

We have the following important lemma.
\begin{mylemma}
For the Markov chains considered above, the following limit exists
\begin{align}
\lim_{n \to \infty} V_n 
\end{align}
and is equal to 
\begin{align}
V_{\infty} \triangleq \mathsf{var}[j_{X|S}(X,D|S)]\big|_{P_{XS} = \pi_{XS}} + 2\sum_{i=1}^{\infty} \mathsf{cov}[ j_{X_1|S_1}(X_1,D|S_1),j_{X_{1+i}|S_{1+i}}(X_{1+i},D|S_{1+i})]. \label{eqn:V-inf}
\end{align} 
\end{mylemma}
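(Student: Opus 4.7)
The plan is to show $V_n$ is a Ces\`aro-like average of the auto-covariances of the stationary process $f_i \triangleq j_{X_i|S_i}(X_i,D|S_i)$, and then invoke geometric ergodicity of the finite-state Markov chain to ensure summability, so that a dominated-convergence argument yields the stated limit.

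First I would observe that since the chain starts from its stationary distribution $\pi_{XS}$, the joint process $\{(X_i,S_i)\}_{i\ge 1}$ is strictly stationary, and hence so is the real-valued process $\{f_i\}_{i\ge 1}$. In particular, $\mathsf{var}(f_i) = \mathsf{var}(f_1)$ for every $i$, and the covariance $\mathsf{cov}(f_1,f_{1+k})$ depends only on the lag $k$. Expanding the square and collecting diagonal and off-diagonal terms gives
\begin{align}
V_n &= \frac{1}{n}\sum_{i=1}^n\sum_{j=1}^n \mathsf{cov}(f_i,f_j) \\
&= \mathsf{var}(f_1) + 2\sum_{k=1}^{n-1}\Bigl(1-\frac{k}{n}\Bigr)\mathsf{cov}(f_1,f_{1+k}).
\end{align}

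Next I would establish absolute summability of the lag covariances, $\sum_{k\ge 1}|\mathsf{cov}(f_1,f_{1+k})|<\infty$. Since $\mathcal{X}\times\mathcal{S}$ is finite and the chain is irreducible and ergodic, the Perron--Frobenius theorem implies that the transition kernel has a spectral gap: there exist constants $C>0$ and $\rho\in(0,1)$ such that $\|P^k(\cdot|x,s)-\pi_{XS}\|_{\mathrm{TV}}\le C\rho^k$ uniformly in $(x,s)$. Because $j_{X|S}(\cdot,D|\cdot)$ is a bounded function on the finite alphabet $\mathcal{X}\times\mathcal{S}$, coupling (or a direct computation using the spectral decomposition) gives $|\mathsf{cov}(f_1,f_{1+k})|\le C'\rho^k$ for some $C'>0$. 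In particular, $\sum_{k=1}^\infty \mathsf{cov}(f_1,f_{1+k})$ converges absolutely, and the series in \eqref{eqn:V-inf} is well-defined.

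Finally, I would pass to the limit in the expression for $V_n$. Writing $a_k \triangleq \mathsf{cov}(f_1,f_{1+k})$ and $w_{n,k}\triangleq (1-k/n)\mathbf{1}\{k<n\}$, we have $|w_{n,k}a_k|\le |a_k|$ with $\sum_k|a_k|<\infty$, and $w_{n,k}\to 1$ pointwise in $k$ as $n\to\infty$. The dominated convergence theorem (for series) yields
\begin{align}
\lim_{n\to\infty} V_n = \mathsf{var}(f_1) + 2\sum_{k=1}^\infty \mathsf{cov}(f_1,f_{1+k}) = V_\infty,
\end{align}
which is exactly the claim.

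The main obstacle is the geometric decay of covariances; this is where the hypotheses of irreducibility, ergodicity, and finiteness of the alphabet are essential. Without the spectral gap, only Ces\`aro summability of the covariances could be asserted, which would not suffice to identify the limit with the absolutely convergent series in \eqref{eqn:V-inf}. Everything else (stationarity, expansion of the variance, dominated convergence) is routine once this decay estimate is in hand.
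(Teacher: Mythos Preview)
Your proposal is correct and follows essentially the same route as the paper: expand $V_n$ via stationarity into the Ces\`aro-weighted sum of lag covariances, invoke exponential decay of those covariances for the finite-state ergodic chain, and pass to the limit. Your presentation is in fact slightly tidier---you justify the geometric decay explicitly via the spectral gap and handle the limit by dominated convergence rather than by splitting off the $\tfrac{1}{n}\sum_k k\,a_k$ term---but the substance is identical.
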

\begin{proof}
The lemma follows from the fact that
\begin{align}
V_n  &= \frac{1}{n}  \mathsf{var} \left(  \sum_{i=1}^n j_{X_i|S_i}(X_i,D|S_i) \right) = \frac{1}{n} \sum_{k,l=1}^n  \mathsf{cov} \left[  j_{X_k|S_k}(X_k,D|S_k),j_{X_l|S_l}(X_l,D|S_l) \right] \\ 
&= \mathsf{var}[j(X,D|S)]\big|_{P_{XS} = \pi_{XS}} + \frac{2}{n} \sum_{j =1}^n (n-j) \mathsf{cov} \left[  j_{X_1|S_1}(X_1,D|S_1),j_{X_{1+j}|S_{1+j}}(X_{1+j},D|S_{1+j}) \right] .  \label{eqn:v_n} 
\end{align}
The equality in \eqref{eqn:v_n}  follows from the time-homogeneity of the chain and simple rearrangements.
Now, since the covariance $|\mathsf{cov} \left(  j_{X_1|S_1}(X_1,D|S_1),j_{X_{1+j}|S_{1+j}}(X_{1+j},D|S_{1+j}) \right)|$ decays exponentially fast in the lag $j$ for this class of Markov chains, 
\begin{equation}
\lim_{n\to\infty}\sum_{j =1}^n j \cdot   \mathsf{cov} \left[  j_{X_1|S_1}(X_1,D|S_1),j_{X_{1+j}|S_{1+j}}(X_{1+j},D|S_{1+j}) \right] =0,
\end{equation}
and thus
\begin{align}
\lim_{n\to\infty} V_n = \mathsf{var}[j(X,D|S)]\big|_{P_{XS} = \pi_{XS}} + 2 \sum_{j =1}^{\infty} \mathsf{cov} \left[  j_{X_1|S_1}(X_1,D|S_1),j_{X_{1+j}|S_{1+j}}(X_{1+j},D|S_{1+j}) \right].
\end{align}
The right-hand-side is exactly $V_\infty $ as desired.
\end{proof}

The second-order rate-distortion function for the Markov sequence is given by the following theorem.
\begin{mythm} \label{C5T3}
The second-order rate-distortion function $L^*(\epsilon,D,\mu)$ for the Markov source with side information is given by
\begin{align}
L^*(\epsilon,D,\mu) = \sqrt{V_{\infty}} Q^{-1}(\epsilon).
\end{align}
\end{mythm}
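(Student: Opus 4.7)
The plan is to mirror the proof of Theorem~\ref{C5T1}, invoking the one-shot achievability bound of Lemma~\ref{C5Lforward} and the converse bound of Lemma~\ref{C5Lconverse}, with the i.i.d.\ central limit theorem replaced by a central limit theorem for additive functionals of a finite-state, irreducible, time-homogeneous Markov chain started from its stationary distribution $\pi_{XS}$. Since $j_{X|S}(\,\cdot\,,D|\,\cdot\,)$ is bounded on the finite alphabet $\mathcal{X}\times\mathcal{S}$, the classical Markov CLT applies to $n^{-1/2}\sum_{i=1}^n[j_{X_i|S_i}(X_i,D|S_i)-\mu]$ and yields a Gaussian limit with variance exactly $V_\infty$ (by the preceding lemma, which identifies $V_\infty$ with the Green--Kubo variance).

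For achievability I would instantiate Lemma~\ref{C5Lforward} with $P_{\bar{Y}^n|S^n}$ equal to the product $\prod_{i=1}^n P_{Y^*|XS}(\,\cdot\,|x_i,s_i)$ built from the single-letter optimal test channel $P_{Y^*|XS}$ for $\pi_{XS}$. Choose $M_n=\lfloor\exp(n\mu+\sqrt{n}L)\rfloor$ for any $L>\sqrt{V_\infty}\,Q^{-1}(\epsilon)$, and pick $\gamma_n$, $\beta_n$, $\delta_n$ scaling polynomially as in the i.i.d.\ analysis of Theorem~\ref{C5T1}, so that the threshold $\log\gamma_n-\log\beta_n-\lambda^*\delta_n$ equals $n\mu+\sqrt{n}L+o(\sqrt{n})$, while the third term $e^{-M_n/\gamma_n}$ decays exponentially. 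The first term of Lemma~\ref{C5Lforward} then reduces to $\Pr[\sum_i j_{X_i|S_i}(X_i,D|S_i)>n\mu+\sqrt{n}L+o(\sqrt{n})]$, which converges to $Q(L/\sqrt{V_\infty})\le\epsilon$ by the Markov CLT. The second term, which involves $\Pr[D-\delta_n\le d(X^n,Y^{n*})\le D\mid X^n]$, is handled by observing that \emph{conditional} on the trajectory $(X^n,S^n)$ the coordinates of $Y^{n*}$ are independent (though non-identically distributed), so $\sum_i d(X_i,Y_i^*)$ is a sum of independent bounded random variables to which the standard (non-identically distributed) Berry--Esseen theorem applies; ergodicity of the chain ensures that outside a set of trajectories of vanishing probability the conditional variance grows linearly in $n$, so the window of width $\delta_n=\Theta(1/\sqrt{n})$ receives probability $\Omega(1/\sqrt{n})$ and with $\beta_n=\Theta(\sqrt{n})$ the second term tends to $0$.

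For the converse I would apply Lemma~\ref{C5Lconverse} with $\gamma=\tfrac{1}{2}\log n$ and $M_n=\lfloor\exp(n\mu+\sqrt{n}L)\rfloor$ for any $L<\sqrt{V_\infty}Q^{-1}(\epsilon)$. To transport the lemma (whose statement uses the $n$-block quantity $j_{X^n|S^n}(X^n,D|S^n)$) into a statement about the single-letter sum $\sum_i j_{X_i|S_i}(X_i,D|S_i)$, I would establish that for ergodic Markov sources the two differ by $o(\sqrt{n})$ in probability. This can be done by showing that the optimal $n$-block test channel converges, in an appropriate KL-sense, to the product of single-letter optimal test channels with $o(\sqrt{n})$ correction, a consequence of the exponential decay of correlations of the chain and the additivity of the single-letter quantity at the stationary distribution. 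Combining this approximation with the Markov CLT yields $\liminf_n \epsilon_n\ge Q(L/\sqrt{V_\infty})>\epsilon$, giving the matching lower bound $L^*(\epsilon,D,\mu)\ge\sqrt{V_\infty}Q^{-1}(\epsilon)$.

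The principal obstacle I anticipate is twofold. First, the conditional Berry--Esseen analysis of the distortion-ball probability in the Markov setting: one must verify, uniformly over a set of trajectories of asymptotic probability one, that the conditional variance $\sum_i \mathsf{var}(d(X_i,Y_i^*)\mid X_i,S_i)$ grows linearly and the Lyapunov-type third-moment ratio decays as $O(1/\sqrt{n})$. Second, the converse reduction from $j_{X^n|S^n}$ to the sum of single-letter $D$-tilted informations requires a careful sub/super-additivity argument exploiting the mixing of the Markov chain; naive product-distribution arguments are not available here because $P_{X^nS^n}$ does not factorize, so the additivity only holds asymptotically and must be controlled at the $o(\sqrt{n})$ scale relevant to the second-order analysis.
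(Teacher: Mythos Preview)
Your proposal follows essentially the same route as the paper: apply Lemma~\ref{C5Lforward} for achievability and Lemma~\ref{C5Lconverse} for the converse, and replace the i.i.d.\ Berry--Ess\'een step of Theorem~\ref{C5T1} by a CLT for additive functionals of the finite-state Markov chain $\{(X_i,S_i)\}$. The parameter choices you sketch ($\gamma_n,\beta_n$ polynomial, third term killed exponentially) match the paper's.

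Two points of comparison are worth noting. First, the paper does \emph{not} attempt the $o(\sqrt n)$ approximation between $j_{X^n|S^n}(X^n,D|S^n)$ and $\sum_{i=1}^n j_{X_i|S_i}(X_i,D|S_i)$ that you flag as the ``principal obstacle''; it simply writes an equality sign between them in both the achievability and the converse. So the sub/super-additivity argument you propose is not carried out in the paper's treatment, and your caution here goes beyond what the paper justifies. Second, where you invoke a generic Markov CLT applied directly to the bounded functional $j_{X|S}(\cdot,D|\cdot)$ of the chain $\{(X_i,S_i)\}$, the paper instead uses Tikhomirov's quantitative Berry--Ess\'een theorem for strongly mixing stationary sequences, and to apply it passes through an auxiliary lemma (Lemma~\ref{C5LMarkov}) asserting that the derived sequence $\{j_{X_i|S_i}(X_i,D|S_i)\}$ is itself a Markov chain. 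Your route avoids that intermediate lemma; the paper's route gives an explicit rate of convergence but at the cost of the extra structural claim. For the distortion-window term, the paper takes $\delta_n$ constant (namely $D/100$) rather than your $\Theta(1/\sqrt n)$ and again appeals to Tikhomirov to get the $\Omega(1/\sqrt n)$ lower bound, which is a slightly different bookkeeping from your conditional-on-trajectory Berry--Ess\'een argument but leads to the same conclusion.
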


This theorem is proved in subsection \ref{C5ProofT3} and it uses a Markov generalization of the Berry-Ess\'een theorem due to Tikhomirov~\cite{Tikhomirov}.

\subsection{Remarks concerning Theorem \ref{C5T3}}
\begin{enumerate}
\item Notice that the second-order coding rate for the Markov case consists of two parts: 
\begin{equation}
A \triangleq \mathsf{var}[j_{X|S}(X,D|S)]\big|_{P_{XS} = \pi_{XS}},\quad\mbox{and}\quad B \triangleq \sum_{i=1}^{\infty} \mathsf{cov}[j_{X_1|S_1}(X_1,D|S_1),j_{X_{1+i}|S_{1+i}}(X_{1+i},D|S_{1+i})]. \label{eqn:twoparts}
\end{equation}
  When the sequence of random variables $\{X_i S_i\}_{i=1}^{\infty}$ is independent  and identically distributed, the second part $B$ in \eqref{eqn:twoparts} vanishes and we recover the result in section \ref{C5sec:DM}. Thus, the infinite sum in the definition of $V_\infty$ in \eqref{eqn:V-inf} quantifies the effect that the mixing of the Markov chain $\{X_i S_i\}_{i=1}^{\infty}$ has on rate of convergence the finite blockength rate-distortion function to the Shannon limit. The faster the mixing is, the faster the convergence to the Shannon limit is. 


\item Denote $\Xi$ as transitional matrix of the Markov chain $X_1 S_1 \to X_2 S_2 \to \ldots \to X_n S_n$. If $\Xi$ is diagonalizable, we can compute $V_{\infty}$ using the following lemma.
\begin{mylemma}
Assume $\Xi = U \mathsf{diag}(1, \lambda_2,\ldots,\lambda_{|\mathcal{X}|   |\mathcal{S}|})U^{\dagger}$. We have 
\begin{align}
V_{\infty} = \mathsf{cov}[j_{X|S}(X,D|S), j_{X'|S'}(X',D|S')]\big|_{P_{XS,X'S'} = \pi_{XS}P_{X'S'|XS}}
\end{align}
where 
\begin{align}
P_{X'S'|XS} (x's'|xs) = \left[U \mathsf{diag}\left(1,\frac{1+ \lambda_2}{1- \lambda_2},\ldots,\frac{1+ \lambda_{|\mathcal{X}|   |\mathcal{S}|}}{1- \lambda_{|\mathcal{X}|   |\mathcal{S}|}} \right) U^{\dagger} \right]_{x's'xs}.
\end{align}
\end{mylemma}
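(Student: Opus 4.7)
The plan is to rewrite $V_{\infty}$ as a bilinear form in the centered $D$-tilted information vector, and then evaluate the implicit geometric series using the spectral decomposition of $\Xi$. Abbreviate $f(x,s) \triangleq j_{X|S}(x,D|s)$, $\mu \triangleq \mathbb{E}_{\pi_{XS}}[f(X,S)]$, $\tilde{f} \triangleq f - \mu\mathbf{1}$, and $\Pi \triangleq \mathsf{diag}(\pi_{XS})$. Since the chain starts in stationarity, a direct expansion gives $\mathsf{cov}\bigl[j_{X_1|S_1}(X_1,D|S_1),\, j_{X_{1+i}|S_{1+i}}(X_{1+i},D|S_{1+i})\bigr] = \tilde{f}^{T}\Pi\Xi^{i}\tilde{f}$, so that \eqref{eqn:V-inf} becomes
\begin{equation}
V_{\infty} \;=\; \tilde{f}^{T}\Pi\Bigl[I + 2\sum_{i=1}^{\infty}\Xi^{i}\Bigr]\tilde{f},
\end{equation}
where the operator in brackets is to be understood as acting on the centered vector $\tilde{f}$.

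Next, the diagonalization $\Xi = U\Lambda U^{\dagger}$ with $\Lambda = \mathsf{diag}(1,\lambda_{2},\ldots,\lambda_{|\mathcal{X}||\mathcal{S}|})$ allows the series to be summed eigen-direction by eigen-direction. By Perron--Frobenius applied to an irreducible, ergodic, stochastic $\Xi$, the right eigenvector for the unit eigenvalue is $\mathbf{1}$ (the first column of $U$) and the dual left eigenvector (the first row of $U^{\dagger}$) is $\pi_{XS}^{T}$. Expanding $\tilde{f} = \sum_{k} c_{k} u_{k}$ in the basis of right eigenvectors, biorthogonality gives $c_{1} = \pi_{XS}^{T}\tilde{f} = 0$, so the non-contractive $\lambda_{1}=1$ direction is annihilated; for $k\ge 2$, $|\lambda_{k}|<1$ ensures $\sum_{i\ge 1}\lambda_{k}^{i} = \lambda_{k}/(1-\lambda_{k})$, and using $1 + \tfrac{2\lambda_{k}}{1-\lambda_{k}} = \tfrac{1+\lambda_{k}}{1-\lambda_{k}}$ yields
\begin{equation}
\Bigl[I + 2\sum_{i=1}^{\infty}\Xi^{i}\Bigr]\tilde{f} \;=\; U\,\mathsf{diag}\Bigl(1,\tfrac{1+\lambda_{2}}{1-\lambda_{2}},\ldots,\tfrac{1+\lambda_{|\mathcal{X}||\mathcal{S}|}}{1-\lambda_{|\mathcal{X}||\mathcal{S}|}}\Bigr)U^{\dagger}\,\tilde{f} \;=:\; M\tilde{f},
\end{equation}
where the leading placeholder $1$ is immaterial since $c_{1}=0$. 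Hence $V_{\infty} = \tilde{f}^{T}\Pi M\tilde{f}$.

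The final step is to reinterpret this bilinear form as a covariance under the joint law $\pi_{XS}P_{X'S'|XS}$ with $P_{X'S'|XS}(x's'|xs) \triangleq M_{xs,x's'}$. First, the row sums of $M$ equal one: since $\Xi$ is stochastic, combining $v_{k}^{T}\Xi = \lambda_{k} v_{k}^{T}$ with $\Xi\mathbf{1}=\mathbf{1}$ forces $v_{k}^{T}\mathbf{1}=0$ for every $k\ge 2$, hence $U^{\dagger}\mathbf{1} = e_{1}$ and $M\mathbf{1} = U e_{1} = \mathbf{1}$. The dual identity $\pi_{XS}^{T}U = e_{1}^{T}$ (again by biorthogonality) shows that the $(X',S')$-marginal of $\pi_{XS}P_{X'S'|XS}$ is $\pi_{XS}$, so $\mathbb{E}[f(X',S')] = \mu$ as well. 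Consequently
\begin{equation}
\tilde{f}^{T}\Pi M\tilde{f} \;=\; \mathbb{E}\bigl[\tilde{f}(X,S)\tilde{f}(X',S')\bigr] \;=\; \mathsf{cov}\bigl[f(X,S),\,f(X',S')\bigr]\big|_{P_{XS,X'S'}=\pi_{XS}P_{X'S'|XS}},
\end{equation}
which is exactly the claimed identity.

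The main obstacle is the formal handling of the series $\sum_{i\ge 1}\Xi^{i}$, which diverges in operator norm because $\Xi$ has spectral radius $1$; the argument hinges on the centering $\pi_{XS}^{T}\tilde{f}=0$, which kills the only non-contractive eigen-direction and lets diagonalizability package the remaining summable components into the single matrix $M$. Sign and normalization ambiguities in the choice of $U$ are inconsequential, because only the product $M = U\,\mathsf{diag}(\cdot)\,U^{\dagger}$ enters the conclusion and this product is uniquely determined by $\Xi$; validity of $M$ as a (signed) stochastic matrix is not needed beyond the row-sum check above, since the covariance identification uses $M$ only as a bilinear operator.
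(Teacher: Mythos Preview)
Your proposal is correct and follows essentially the same approach the paper sketches: exploit time-homogeneity and stationarity to write $V_\infty$ as a bilinear form involving $I + 2\sum_{i\ge 1}\Xi^i$, diagonalize $\Xi$, observe that $|\lambda_k|<1$ for $k\ge 2$ so that $\sum_{k\ge 1}\lambda_i^k = \lambda_i/(1-\lambda_i)$, and absorb the resulting diagonal into $M$. Your write-up supplies considerably more detail than the paper's brief sketch (in particular, the explicit centering argument that kills the non-contractive eigen-direction and the verification that the resulting bilinear form is a genuine covariance under $\pi_{XS}P_{X'S'|XS}$), but the route is the same.
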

This lemma can be proved using techniques presented by Tomamichel and Tan in \cite[Appendix A]{TomTan13b}. Briefly, we make use of the fact that the Markov chain $\{X_i S_i\}_{i=1}^{\infty}$ is time-homogeneous and starts from the stationary distribution. Secondly, in the diagonalization of the transition matrix $\Xi$, except for eigenvalue $\lambda_1 \triangleq 1$, the rest of the eigenvalues satisfy  $|\lambda_i| <1$. Thus, we have $\sum_{k=1}^{\infty} \lambda_i^k = \frac{\lambda_i}{1-\lambda_i}$ for all but the leading eigenvalue.
\end{enumerate}
\section{Conclusion}
In this paper, the second-order coding rates for the source coding problem with side information available at both the encoder and the decoder are characterized for three different kinds of sources: discrete   memoryless sources, Gaussian memoryless sources and Markov sources. The conditional $D$-tilted information density is found to play a key role in our second-order analysis. 

One of the interesting findings from our work is that the second order rate-distortion functions are same for both Gaussian source coding without side information and with side information (at the enocder and decoder).   The means that the dispersion for both problems is the same and equal to $1/2$ nats$^2$ per source symbol.  An intriguing open problem emanating from this work is whether the dispersion of the Gaussian Wyner-Ziv system \cite{WZ76} is also $1/2$ nats$^2$ per source symbol. 
\section{Appendix} \label{C5sec:app}
\subsection{Proof of Lemma \ref{C5Lforward}} \label{C5ProofLforward}
Lemma \ref{C5Lforward} is a corollary of Lemma \ref{C5Lrcb}. From Lemma \ref{C5Lrcb}, we can show the existence of  an $(M_n, D, n, \epsilon_n)$-code such that
\begin{align}
\epsilon_n &\leq \mathbb{E} \{\mathbb{E}[(1 - P_{\bar{Y}^n|S^n} (B_D(x^n))|S^n)^M] \} \\
           &= \sum_{s^n} P_{S^n}(s^n)  \mathbb{E}[(1 - P_{\bar{Y}^n|S^n} (B_D(x^n))|S^n = s^n)^M].
\end{align}

Using techniques from \cite[Corollary 2.20]{KostinaThesis}, we can show that for every $s^n$, 
\begin{align}
 &\mathbb{E}[(1 - P_{\bar{Y}^n|S^n} (B_D(x^n))|S^n = s^n)^M]  \nonumber\\*
  &\leq \Pr [j_{X^n|S^n}(X^n,D|S^n) > \log \gamma_n - \log \beta_n - \lambda_n^* \delta_n|S^n=s^n] \notag\\*
           &\quad + \mathbb{E}[|1 - \beta_n \Pr[D -\delta_n \leq d(X^n,Y^{n*}) \leq D|X^n]|^+|S^n=s^n] \notag \\*
           &\quad + e^{-\frac{M}{\gamma_n}} \mathbb{E}[\min(1,\gamma_n \exp (-j_{X^n|S^n} (X^n,D,S^n)))|S^n=s^n], \label{C5ELfw}
\end{align}
for any $\gamma_n, \beta_n$, and $\delta_n$.

Taking the average of both sides of inequality (\ref{C5ELfw}) over all sequences $s^n$ completes the proof of this lemma. 

\subsection{Proof of Theorem \ref{C5T1}} \label{C5ProofT1}
In this subsection, we prove Theorem \ref{C5T1}. The proof makes use of the Berry-Ess\'een Theorem \cite[Theorem~2, Chapter XVI.~5]{FellerII}. This theorem is stated as follows.

\begin{mythm}[Berry-Ess\'een Theorem] \label{LemmaBEscalar}
Let $X_k$, for $k=1,2,\ldots,n$ be independent random variables with $\mu_k = \mathbb{E} [X_k]$, $\sigma_k^2 = \mathsf{var} [X_k]$, $t_k = \mathbb{E} [|X_k - \mu_k|^3]$, $\sigma^2 = \sum_{k=1}^n \sigma_k^2$, and $T= \sum_{k=1}^n t_k$. Then for any $\lambda \in\mathbb{R}$, we have
\begin{align}
\left| \Pr \left[ \sum_{k=1}^n (X_k - \mu_k) \geq \lambda \sigma \right] - Q(\lambda) \right| \leq \frac{6T}{\sigma^3}.
\end{align}
\end{mythm}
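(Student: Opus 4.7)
The statement is the classical Berry--Ess\'een inequality with Feller's constant $6$, so the plan is to recapitulate Ess\'een's characteristic-function proof as in \cite{FellerII}. A scale reduction first: replace $X_k$ by $Y_k = (X_k-\mu_k)/\sigma$ and let $S_n = \sum_k Y_k$ with distribution function $F_n$, characteristic function $\phi_n(t)=\mathbb{E}[e^{itS_n}]$, and normalised third-moment sum $\rho := T/\sigma^3 = \sum_k \mathbb{E}|Y_k|^3$. The claim reduces to $\sup_\lambda |F_n(\lambda)-\Phi(\lambda)| \leq 6\rho$; translating back through $\Pr(\sum_k(X_k-\mu_k)\geq \lambda\sigma) = 1-F_n(\lambda-)$ and $Q(\lambda)=1-\Phi(\lambda)$ then recovers the stated form.

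The central tool I would invoke is Ess\'een's smoothing lemma: for every $U>0$,
\begin{align}
\sup_\lambda |F_n(\lambda) - \Phi(\lambda)| \;\leq\; \frac{1}{\pi}\int_{-U}^{U}\frac{|\phi_n(t) - e^{-t^2/2}|}{|t|}\,\mathrm{d}t \;+\; \frac{c_0}{U},
\end{align}
with a universal $c_0$ governed by the uniform bound on the standard normal density. I would then Taylor-expand each factor of $\phi_n(t) = \prod_k \mathbb{E}[e^{itY_k}]$ using the elementary bound $|e^{iu}-1-iu+u^2/2|\leq |u|^3/6$, which yields $|\mathbb{E}[e^{itY_k}] - (1 - \sigma_k^2 t^2/(2\sigma^2))| \leq |t|^3 t_k/(6\sigma^3)$. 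Combined with the telescoping estimate $|\prod_k a_k - \prod_k b_k|\leq \sum_k |a_k-b_k|$ valid for unit-disc factors, and with $|\log(1+z)-z|\leq |z|^2$ used to compare the quadratic-approximation product $\prod_k(1-\sigma_k^2 t^2/(2\sigma^2))$ to $e^{-t^2/2}$, one obtains an estimate of the form $|\phi_n(t) - e^{-t^2/2}| \leq C\rho |t|^3 e^{-t^2/4}$ valid on a window $|t| \leq c_1/\rho$.

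The final step is to choose $U = c_2/\rho$ inside the smoothing lemma; the integral then contributes $\mathcal{O}(\rho)$ because $\int t^2 e^{-t^2/4}\,\mathrm{d}t < \infty$, while the boundary term contributes $c_0\rho/c_2$, so summing and tuning the constants $c_1,c_2$ delivers the desired $6\rho$ bound. The main obstacle is arithmetical rather than conceptual: every estimate in the pipeline is loose by a multiplicative factor, and matching Feller's constant $6$ exactly requires a coordinated sharpening of the smoothing constant $c_0$, the Taylor remainders, and the choice of $U$. A modern alternative via Stein's method yields an inequality of the same form but with a larger absolute constant; since the rest of the paper invokes this theorem only through its $\mathcal{O}(T/\sigma^3)$ \emph{form} in establishing the Gaussian approximations of sums of i.i.d.\ information densities, the specific numerical value is simply imported from \cite{FellerII}.
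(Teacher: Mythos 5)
The paper does not prove this theorem; it is stated verbatim and cited directly from Feller (Theorem~2, Chapter~XVI.5 of \cite{FellerII}) and then used as a black-box tool in the achievability and converse proofs of Theorems~\ref{C5T1}--\ref{C5T3}. Your outline of the Ess\'een smoothing-lemma argument is a faithful sketch of the classical proof underlying that citation, and you correctly observe at the end that the paper only needs the $O(T/\sigma^3)$ form and imports the constant $6$ rather than deriving it, so your account and the paper's treatment agree.
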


\subsubsection{Achievability proof of Theorem \ref{C5T1}}
In this part, we prove that, for any $\delta >0$, $\sqrt{V} Q^{-1}(\epsilon) + \delta$ is second-order $(\epsilon,D,\kappa)$-achievable when $\kappa = R(X;D|S)$.

We apply Lemma \ref{C5Lforward} to construct a sequence of $(M_n, D,n,\epsilon_n)$-codes as follows. Choose $\delta_n = \frac{D}{100}$.

 Similar to the proof in \cite[Lemma 4]{KV12}, it can be proved that
\begin{align}
\Pr[D -\delta_n \leq d(X^n,Y^{n*}) \leq D|X^n=x^n] \geq \frac{C}{\sqrt{n}},
\end{align}
when $n$ is sufficiently large, for some constant $C$. Intuitively, this is because $\mathbb{E}[d(X_i,Y_{i}^*)]$ has mean $D$, finite variance and finite absolute third-order moment. Thus, we can apply Theorem \ref{LemmaBEscalar} here.

Choose $\beta_n = \frac{\sqrt{n}}{C}$. We have
\begin{align}
\mathbb{E}[\mathbb{E}[|1 - \beta_n \Pr[D -\delta_n \leq d(X^n,Y^{n*}) \leq D|X^n]|^+|S^n]] = 0,	
\end{align} 
when $n$ is sufficiently large.

Choose $\gamma_n = \frac{M}{\sqrt{n}}$. We have
\begin{align}
&e^{-\frac{M}{\gamma_n}} \mathbb{E}\{\mathbb{E}[\min(1,\gamma_n \exp (-j_{X^n|S^n} (X^n,D,S^n)))|S^n]\} \notag \\
&=e^{-\sqrt{n}}\mathbb{E}\{\mathbb{E}[\min(1,\gamma_n \exp (-j_{X^n|S^n} (X^n,D,S^n)))|S^n]\}   \\
&\leq e^{-\sqrt{n}} \mathbb{E} \{ \mathbb{E}[1|S^n]\} \\
&=e^{-\sqrt{n}}.        
\end{align}

Choose 
\begin{align}
\log M_n = n R(X;D|S) + \sqrt{nV} Q^{-1} (\hat{\epsilon}_n) + \log \sqrt{n} + \lambda_n^* \frac{D}{100} + \log \frac{\sqrt{n}}{C},
\end{align}
where 
\begin{align}
\hat{\epsilon}_n &\triangleq \epsilon - \frac{B_n}{\sqrt{n}} -  e^{-\sqrt{n}} \\
B_n              &\triangleq 6\frac{T_n}{V^{3/2}}  \label{C5EBn} \\
T_n &\triangleq \frac{1}{n} \sum_{i=1}^{n} \mathbb{E}[|j_{X|S}(X,D|S)- R(X;D|S)|^3].
\end{align}

Applying Lemma  \ref{C5Lforward},  for $n$ sufficiently large, we have
\begin{align}
\epsilon_n &\leq \Pr \left[j_{X^n|S^n} (X^n,D|S^n) > n R(X;D|S) + \sqrt{nV} Q^{-1} (\hat{\epsilon}_n) \right]  + e^{-\sqrt{n}} \\*
           &\leq \Pr \left[\sum_{i=1}^n j_{X|S} (X_i,D|S_i) > n R(X;D|S) + \sqrt{nV} Q^{-1} (\hat{\epsilon}_n) \right]  + e^{-\sqrt{n}} \\*
           &\leq \epsilon \label{C5Eforward} 
\end{align}
where equation (\ref{C5Eforward}) follows from Theorem \ref{LemmaBEscalar}.

Therefore, we have constructed a sequence of $(M_n, D,n,\epsilon_n)$-codes satisfying
\begin{align}
\limsup_{n \to \infty} \frac{1}{\sqrt{n}} (\log M_n - n R(X;D|S)) &=\sqrt{{V}} Q^{-1}(\epsilon) \\ 
                           \limsup_{n \to \infty} \epsilon_n &\leq \epsilon.
\end{align}

\subsubsection{Converse proof of Theorem \ref{C5T1}}
Let $L$ be a second-order $(\epsilon,D,R(X;D|S))$-achievable. We want to show   $Q^{-1}(\epsilon) \sqrt{V} \leq L + \delta$, for any $\delta >0$.

Since $L$ is second-order $(\epsilon,D,R(X;D|S))$-achievable, by definition, there exists a sequence of   $(M_n,n,D,\epsilon_n)$-codes satisfying
 \begin{align}
 \log M_n  &\leq  n R(X;D|S) + \sqrt{n} (L + \delta), \label{C5Econverse1}\\
 \limsup_{n \to \infty} \epsilon_n  &\leq \epsilon,  \label{C5Econverse4}
 \end{align}
when $n$ is sufficiently large.

Using Lemma \ref{C5Lconverse} for $M_n$ satisfying equation (\ref{C5Econverse1}) and $\gamma = \log \sqrt{n}$, we have
\begin{align}
\epsilon_n &\geq  \Pr[j_{X^n|S^n}(X^n,D|S^n) \geq \log M_n + \log \sqrt{n}] - \frac{1}{\sqrt{n}} \\
           &=    \Pr\bigg[\sum_{i=1}^n j_{X|S}(X_i,D|S_i) \geq \log M_n + \log \sqrt{n}\bigg] - \frac{1}{\sqrt{n}} \\
           &\geq \Pr\bigg[\sum_{i=1}^n j_{X|S}(X_i,D|S_i) \geq  n R(X;D|S) + \sqrt{n} (L + \delta)  + \log \sqrt{n} \bigg]   - \frac{1}{\sqrt{n}} \\ 
           &\geq \Pr \bigg[\sum_{i=1}^n j_{X|S}(X_i,D|S_i) - n R(X;D|S) \geq \sqrt{nV} \bigg( \frac{L + \delta}{\sqrt{V}}  + \frac{\log \sqrt{n}}{\sqrt{nV}} \bigg)  \bigg]  - \frac{1}{\sqrt{n}} \\ 
           &\geq Q \bigg( \frac{L + \delta}{\sqrt{V}}  + \frac{\log \sqrt{n}}{\sqrt{nV}} \bigg) - \frac{B_n}{\sqrt{n}}- \frac{1}{\sqrt{n}} \label{C5Econverse2} \\
           &= Q \bigg( \frac{L + \delta}{\sqrt{V}}   \bigg) + O \bigg( \frac{\log \sqrt{n}}{\sqrt{n}} \bigg) - \frac{B_n +1}{\sqrt{n}} \label{C5Econverse3}
\end{align}
where equation (\ref{C5Econverse2}) follows from Theorem  \ref{LemmaBEscalar} and in this equation $B_n$ is defined in (\ref{C5EBn}), and (\ref{C5Econverse3}) follows from the continuity of $Q(\cdot)$ and Taylor expansion.

Combining (\ref{C5Econverse3}) and (\ref{C5Econverse4}), we have
\begin{align}
\epsilon &\geq \limsup_{n \to \infty} \epsilon_n \\*
         &= Q \bigg( \frac{L + \delta}{\sqrt{V}}\bigg). 
\end{align}
Thus, all second-order achievable rates $L$ must satisfy $L\ge Q^{-1}(\epsilon) \sqrt{V}-\delta$. Taking $\delta\downarrow 0$, we complete the proof of the converse.

\subsection{Proof of Theorem \ref{C5T2}} \label{C5ProofT2}
Define the correlation coefficient $\rho$ between $X_i$ and $S_i$, for $i=1,2,\ldots,n$ as
\begin{align}
\rho \triangleq \frac{\mathbb{E}[XS]}{\sqrt{ \mathbb{E}[X^2]\mathbb{E}[S^2] }}=\frac{\sigma_X}{\sqrt{\sigma_Z^2 + \sigma_X^2}}.
\end{align}
Next, we define the conditional mean of $X$ given $S=s$ as
\begin{align}
\mu(s) \triangleq \rho\cdot \frac{\sigma_X}{\sigma_S}\cdot s= \rho^2\cdot  s = \frac{\sigma_X^2}{\sigma_Z^2 + \sigma_X^2}\cdot s.
\end{align}
This is simply the minimum mean squared estimate of $X$ given $S=s$. 

\subsubsection{Achievability proof of Theorem \ref{C5T2}}
In this part, we prove that, for any $\delta >0$, $\sqrt{\frac{1}{2}}  Q^{-1}(\epsilon) \log (e) + \delta$ is second-order $(\epsilon,D,\frac{1}{2}\log \frac{\sigma_{X|S}^2}{D})$-achievable. We apply Lemma \ref{C5Lrcb} to construct a sequence of $(M_n, D,n,\epsilon_n)$-codes as follows.
For each $s^n$, choose  the distribution $P_{\bar{Y}^n|S^n} (\cdot |S^n = s^n)$ in equation (\ref{C5Ercb}) as the uniform distribution on the surface of the $n$-dimensional sphere, with radius $r_0 \triangleq \sqrt{n (\sigma^2_{X|S} -D )}$ and centre at
\begin{align}
\mu(s^n) \triangleq (\mu(s_1), \mu (s_2),\ldots, \mu(s_n)).
\end{align}
Observe that $P_{\bar{Y}^n|S^n} (B_D(x^n))|S^n=s^n) =0$ if
\begin{align}
|x^n - \mu(s^n)| < \sqrt{n (\sigma^2_{X|S} -D )} - \sqrt{nD} \triangleq r_1 
\end{align}
or 
\begin{align}
|x^n - \mu(s^n)| > \sqrt{n (\sigma^2_{X|S} -D )} + \sqrt{nD} \triangleq r_2. 
\end{align}
Therefore, we have a sequence of $(M_n, D,n,\epsilon_n)$-codes that satisfies
\begin{align}
\epsilon_n &\leq \mathbb{E} \{\mathbb{E}[(1 - P_{\bar{Y}^n|S^n} (B_D(X^n))|S^n)^{M_n}] \} \\
           & \leq \mathbb{E} \{\mathbb{E}[(1 - P_{\bar{Y}^n|S^n} (B_D(X^n)))^{M_n} .\Pr(r_1 \leq |x^n - \mu(S^n)| \leq r_2)|S^n] \} \nonumber \\*
           &\quad +   \mathbb{E} \{\mathbb{E}[\Pr(r_2 < |X^n - \mu(S^n)|)|S^n] \} \nonumber \\*
           &\quad +  \mathbb{E} \{\mathbb{E}[\Pr(r_1 > |X^n - \mu(S^n)|)|S^n] \}.
\end{align}
By the weak law of large numbers, we observe that the second term and the third term become vanishingly small as $n \to \infty$. Now, we  analyze the first term. 

\begin{figure}
\centering
\includegraphics[width=0.70\textwidth]{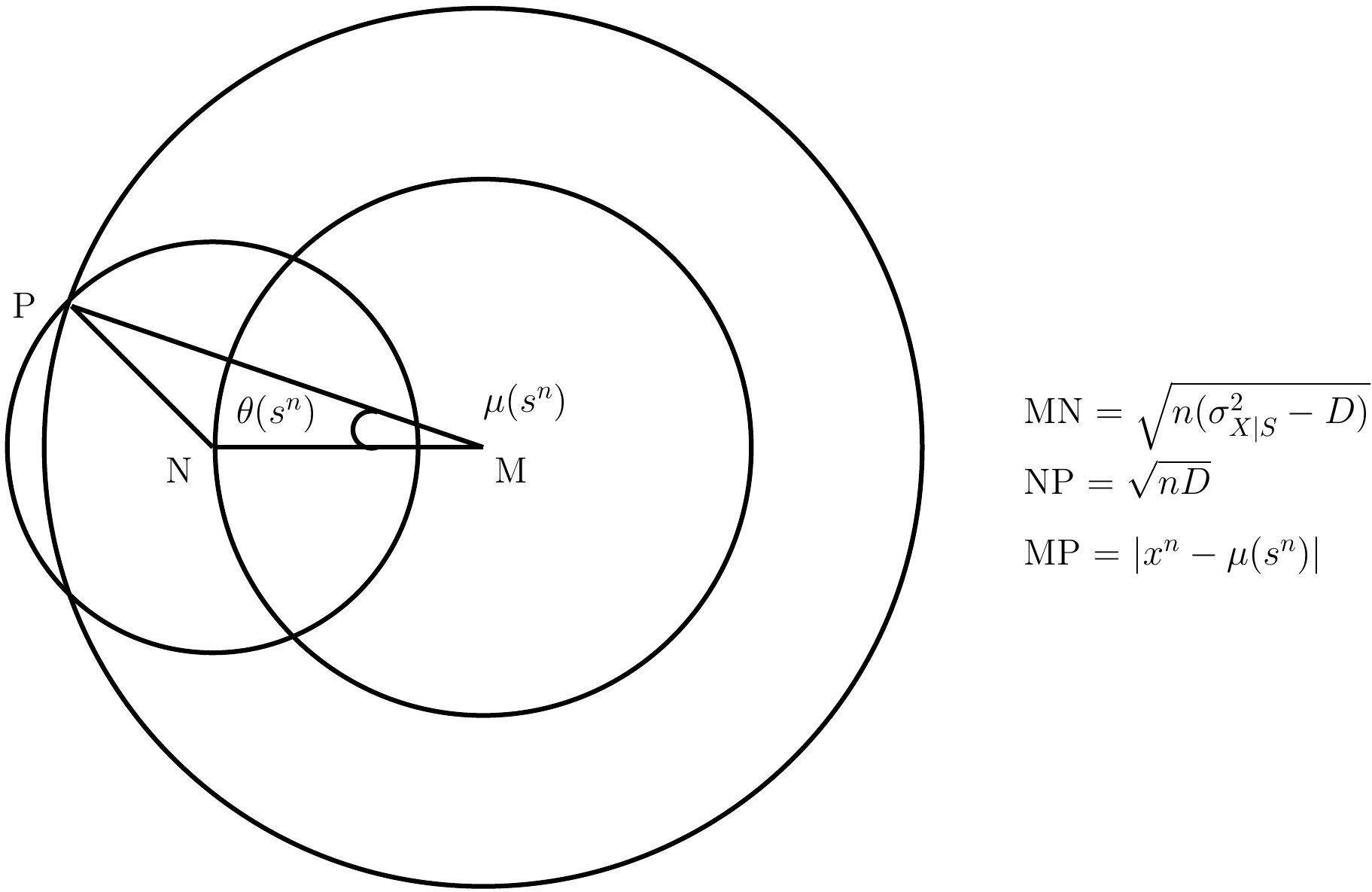}
\caption{Encoding for Gaussian source}
\label{fig:C5Gaussian}
\end{figure}

Note that $\frac{|X^n - \mu(s^n)|^2}{\sigma^2_{X|S}}$ has a central $\chi_n^2$ distribution. Denote $A_n(r_0) \triangleq \frac{n\pi^{\frac{n}{2}}}{\Gamma(\frac{n}{2} +1)} r_0^{n-1}$ as the surface area of an $n$-dimensional sphere of radius $r_0$. Denote $A_n(r_0, \theta(s^n))$ as the surface area of $n$-dimensional polar cap of radius $r_0$ and angle $\theta(s^n)$ (see Figure \ref{fig:C5Gaussian}), where the angle $0< \theta(s^n) < \pi$ is given by
\begin{align}
\theta(s^n) \triangleq \cos^{-1} \left(\frac{|x^n - \mu(s^n)|^2 + r_0^2 - nD}{2|x^n - \mu(s^n)|r_0} \right).
\end{align}
We have
\begin{align}
& \mathbb{E} \{\mathbb{E}[(1 - P_{\bar{Y}^n|S^n} (B_D(X^n)))^{M_n} .\Pr(r_1 \leq |x^n - \mu(S^n)| \leq r_2)|S^n] \} \\
  &= \mathbb{E} \left\{\mathbb{E} \left[ \left(1 - \frac{A_n(r_0)}{A_n(r_0, \theta(s^n))} \right)^{M_n} .\Pr(r_1 \leq |x^n - \mu(S^n)| \leq r_2)\Big|S^n \right] \right\} \label{C5EG1} \\
  &\leq \mathbb{E} \left\{\mathbb{E} \left[ \left(1 - \frac{\Gamma(\frac{n}{2} +1)}{\sqrt{\pi}n \Gamma(\frac{n-1}{2}+1)} (\sin (\theta(s^n)))^{n-1} \right)^{M_n} .\Pr(r_1 \leq |x^n - \mu(S^n)| \leq r_2)\Big|S^n \right] \right\} \label{C5EG2} \\
  &\leq \mathbb{E} \left\{  \left[ n \int_{0}^{\infty} \left( 1 - f(n,z) \right)^{M_n} 1\{ r_1 \leq z \leq r_2 \} P_{\chi_n^2}(nz)dz   \Big|S^n \right] \right\} \label{C5EG3} \\
  &= n \int_{0}^{\infty} \left( 1 - f(n,z) \right)^{M_n} 1\{ r_1 \leq z \leq r_2 \} P_{\chi_n^2}(nz)dz  \label{eqn:thm37_kv}   
\end{align}
where 
\begin{itemize}
\item (\ref{C5EG1}) comes from geometry,
\item (\ref{C5EG2}) comes from a lower bound on $A_n(r_0, \theta(s^n))$ \cite{Sakrison}, and
\item  in (\ref{C5EG3}),  the function $f(n,z)$ is defined as
\begin{align}
f(n,z) \triangleq \frac{\Gamma(\frac{n}{2} +1)}{\sqrt{\pi}n \Gamma(\frac{n-1}{2}+1)} \left(1 - \frac{\left(1+ z - 2\frac{D}{\sigma^2_{X|S}}\right)^2}{4 \left(1- \frac{D}{\sigma^2_{X|S}} \right)z}   \right)^{\frac{n-1}{2}}
\end{align}
and $P_{\chi_n^2}$ is  the central $\chi_n^2$ probability density function.
\end{itemize}  
Next, we choose the sequence $M_n$ such that
\begin{align}
\frac{\log M_n}{n} = \frac{1}{2} \log\frac{\sigma^2_{X|S}}{D} + \sqrt{\frac{1}{2n}}Q^{-1}(\epsilon) \log e +  \frac{\log n}{2n} + \frac{\log \log n}{n} + O \left(\frac{1}{n} \right).
\end{align}
We can check that
\begin{align}
\limsup_{n \to \infty} \frac{1}{\sqrt{n}} \left(\log M_n -  \frac{n}{2} \log\frac{\sigma^2_{X|S}}{D} \right) =\sqrt{\frac{1}{2}}Q^{-1}(\epsilon) \log e.
\end{align}                                                                  
Using similar techniques as in \cite[Appendix K]{KV12}, we can show that the bound in \eqref{eqn:thm37_kv} can be analyzed using the Gaussian approximation to yield
\begin{align}
 \limsup_{n \to \infty} \epsilon_n &\leq \epsilon.
\end{align}

\subsubsection{Converse proof of Theorem \ref{C5T2}}
The conditional $D$-tilted information in the jointly Gaussian case is
\begin{align}
j_{X^n|S^n}(x^n,D|s^n) = \frac{n}{2} \log \frac{\sigma_{X|S}^2}{D} + \frac{\left|x^n -   \frac{\sigma_X^2}{ \sigma_X^2 + \sigma_Z^2}  s^n\right|^2}{2\sigma_{X|S}^2} \log e - \frac{n}{2}\log e.
\end{align}
For each $i \in \{1,2,\ldots,n\}$, we have
\begin{align}
\mathbb{E} [j_{X_i|S_i}(X_i,D|S_i)] = \frac{1}{2}\log \frac{\sigma_{X|S}^2}{D},
\end{align}
and
\begin{align}
\mathsf{var} [j_{X_i|S_i}(X_i,D|S_i)] &= \mathbb{E} \left[\frac{|X_i - \mu(S_i)|^2}{2 \sigma_{X|S}^2}  \log e - \frac{\log e}{2}   \right]^2 \\
                                      &= \mathbb{E}\left[\mathbb{E} \left[\left(\frac{|X_i - \mu(S_i)|^2}{2 \sigma_{X|S}^2}  \log e - \frac{\log e}{2}   \right)^2 \bigg|S_i \right]   \right] \\
                                      &= (\log e)^2 \mathbb{E}\left[\mathbb{E} \left[\left(\frac{|X_i - \mu(S_i)|^4}{4 \sigma_{X|S}^4}   -  \frac{|X_i - \mu(S_i)|^2}{2 \sigma_{X|S}^2} + \frac{1}{4} \right) \bigg|S_i \right]   \right] \\
                                      &= \frac{1}{2}(\log e)^2. 
\end{align}
Let $L$ be  second-order $(\epsilon,D,\frac{1}{2}\log \frac{\sigma_{X|S}^2}{D})$-achievable. We want to show that $Q^{-1}(\epsilon) \sqrt{\frac{1}{2}} \log e \leq L + \delta$, for any $\delta >0$. Since $L$ is second-order $(\epsilon,D,\frac{1}{2}\log \frac{\sigma_{X|S}^2}{D})$-achievable, there exists a sequence of   $(M_n,n,D,\epsilon_n)$-codes satisfying
 \begin{align}
 \log M_n  &\leq    \frac{n}{2}\log \frac{\sigma_{X|S}^2}{D} + \sqrt{n} (L + \delta), \label{C5EconverseG}\\
 \limsup_{n \to \infty} \epsilon_n  &\leq \epsilon,  \label{C5EconverseG4}
 \end{align}
where \eqref{C5EconverseG} holds for  all $n$ sufficiently large.

Using Lemma \ref{C5Lconverse} for $M_n$ satisfying equation (\ref{C5EconverseG}) and $\gamma = \log \sqrt{n}$, we have
\begin{align}
\epsilon_n &\geq  \Pr[j_{X^n|S^n}(X^n,D|S^n) \geq \log M_n + \log \sqrt{n}] - \frac{1}{\sqrt{n}} \\
           &\geq Q \left( \frac{L + \delta}{\sqrt{\frac{1}{2}} \log e}  + \frac{\log \sqrt{n}}{\sqrt{\frac{1}{2}n} \log e} \right) - \frac{B_n}{\sqrt{n}}- \frac{1}{\sqrt{n}} \label{C5EconverseG2} \\
           &= Q \left( \frac{L + \delta}{\sqrt{\frac{1}{2}} \log e}   \right) + O \left( \frac{\log \sqrt{n}}{\sqrt{n}} \right) - \frac{B_n +1}{\sqrt{n}} \label{C5EconverseG3}
\end{align}
where equation (\ref{C5EconverseG2}) follows from  Theorem  \ref{LemmaBEscalar} and in this equation $B_n$ is the constant in  Theorem \ref{LemmaBEscalar}, and (\ref{C5EconverseG3}) follows from the continuous differentiability  of $Q(\cdot)$ and Taylor expansion.

Combining (\ref{C5EconverseG3}) and (\ref{C5EconverseG4}), we have
\begin{align}
\epsilon &\geq \limsup_{n \to \infty} \epsilon_n \\*
         &= Q \left( \frac{L + \delta}{\sqrt{\frac{1}{2}}\log e}\right). 
\end{align}
This completes the proof of the converse upon taking $\delta\downarrow 0$.

\subsection{Proof of Theorem \ref{C5T3}} \label{C5ProofT3}
To prove Theorem \ref{C5T3}, we use a variant of Berry-Ess\'een Theorem \cite{Tikhomirov} to deal with a sequence of random variables that forms a  Markov chain. This theorem is stated as follows.

\begin{mythm} \label{LemmaBEmarkov}
Consider a stationary  process $\{X_k: k\geq 1\}$, with $\mathbb{E}X_1 = 0$ and finite variance. Define the strong mixing coefficient $\alpha(n)$ as
\begin{align}
\alpha(n) \triangleq \sup \{ |\Pr(A \cap B) - \Pr(A)\Pr(B)|:A \in \mathcal{F}_{-\infty}^{k}, B \in \mathcal{F}_{k+n}^{\infty}, k \in \mathbb{Z} \},
\end{align}
where $\mathcal{F}_{a}^{b} = \sigma \left\langle X_i: i \in[a,b]\cap \mathbb{Z} \right\rangle$ is the $\sigma$-field generated by $\{X_i: i \in[a,b]\cap \mathbb{Z}\}$, $-\infty \leq a \leq b \leq \infty$.
 Denote 
\begin{align}
\sigma_n^2 \triangleq \mathbb{E} \left[\left(\sum_{j=1}^n X_j\right)^2 \right].
\end{align}
Assume that the strong mixing coefficient is exponentially decaying, i.e., $\alpha(n) \leq K e^{-\kappa_1 n}$ for some $K$ and $\kappa_1$ and all $n \geq 1$. Assume $\mathbb{E}[|X_1^{2+\gamma}] < \infty$ for some $\gamma$, $1\geq \gamma >0$. 
Then, there is a constant $B(K,\kappa, \gamma) > 0$ such that, for all $n \in \mathbb{N}$,
\begin{align}
\sup_{x \in \mathbb{R}} \left| \Pr \left[ \frac{1}{\sigma_n} \sum_{k=1}^n X_k   \leq \lambda  \right] - \Phi(\lambda) \right| \leq \frac{B(K,\kappa_1, \gamma) (\log n)^{1+ \frac{\gamma}{2}}}{n^{\frac{\gamma}{2}}}.
\end{align}
\end{mythm}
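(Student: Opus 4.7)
My plan is to prove this Berry–Esséen bound for strongly mixing stationary sequences via the classical Bernstein blocking argument combined with characteristic-function analysis and Esseen's smoothing inequality. This is essentially the original strategy of Tikhomirov, so I would retrace it carefully while keeping track of the exponents that produce the $(\log n)^{1+\gamma/2}/n^{\gamma/2}$ rate.

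First, I would partition $\{1,\dots,n\}$ into alternating ``big'' and ``small'' blocks of lengths $p$ and $q$, with $p\gg q$, both growing with $n$ and to be optimized at the end. Let $U_j$ be the partial sum of $X_i$'s over the $j$-th big block and $V_j$ the partial sum over the $j$-th small block, so that $\sum_{k=1}^n X_k = U+V$ with $U=\sum_j U_j$, $V=\sum_j V_j$, and with roughly $m \approx n/(p+q)$ blocks of each type. Using the strong mixing hypothesis one can bound $\mathrm{Cov}(X_i,X_{i+\ell})$ via the standard Ibragimov covariance inequality $|\mathrm{Cov}(X_i,X_{i+\ell})|\le 8\|X_1\|_{2+\gamma}^2\,\alpha(\ell)^{\gamma/(2+\gamma)}$, and hence $\mathbb{E}[V^2]=O(mq)$ while $\sigma_n^2$ is of order $n$ (for non-degenerate variance, which follows from the mixing and moment hypotheses). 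Choosing $q$ much smaller than $p$ makes $V/\sigma_n$ negligible in probability.

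Second, I would show that the big blocks, separated by gaps of length $q$, behave approximately independently at the level of characteristic functions. An inductive application of a standard lemma (e.g.\ Volkonskii–Rozanov) gives
\begin{equation}
\Bigl|\mathbb{E}\bigl[e^{it U/\sigma_n}\bigr]-\prod_{j=1}^m \mathbb{E}\bigl[e^{itU_j/\sigma_n}\bigr]\Bigr|\le 16\,(m-1)\,\alpha(q)\le 16 m K e^{-\kappa_1 q},
\end{equation}
so taking $q=C\log n$ with $C$ large enough relative to $1/\kappa_1$ makes this error polynomially small in $n$, uniformly in $|t|\le T$ for any polynomial range $T$. Then I would apply the classical Berry–Esséen / Lyapunov bound with exponent $2+\gamma$ to an independent copy $(\tilde U_j)$ of $(U_j)$, controlling $\sum_j\mathbb{E}|U_j|^{2+\gamma}$ by a Rosenthal-type inequality for mixing sequences (this is where a factor $p^{\gamma/2}$ enters, reflecting the ``square root of $p$'' scaling of a single big-block sum). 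Esseen's smoothing inequality then converts the CF estimate into a Kolmogorov-distance estimate of the desired form.

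Finally, I would optimize $p$ and $q$: with $q\sim \log n$ and $p$ chosen so that the three competing errors (small-block mass, mixing defect, and Lyapunov remainder for the $U_j$'s) balance, a short calculation gives the announced rate $(\log n)^{1+\gamma/2}/n^{\gamma/2}$. The main obstacle, and the step I expect to require the most care, is the Rosenthal-type moment bound for a single big-block sum: one needs $\mathbb{E}|U_j|^{2+\gamma}\le C p^{1+\gamma/2}\,\|X_1\|_{2+\gamma}^{2+\gamma}$ with a constant depending only on $K,\kappa_1,\gamma$, and this is precisely the place where the extra logarithmic factor in the final rate originates. The rest of the argument—covariance bounds, CF comparison, Esseen smoothing, and optimizing the block sizes—is tedious but mechanical once that inequality is in hand.
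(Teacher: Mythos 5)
The paper itself does not prove this statement: it is quoted verbatim as a known result of Tikhomirov (the cited reference \cite{Tikhomirov}) and is used as a black box in the proof of Theorem~\ref{C5T3}. So there is no ``paper's own proof'' to compare against. That said, your outline is consistent with Tikhomirov's original argument: Bernstein blocking into alternating long and short blocks, the Volkonskii--Rozanov product approximation for the characteristic function across the mixing gaps, a Rosenthal/Lyapunov-type moment bound for the long-block sums, and Esseen's smoothing inequality to convert the characteristic-function estimate into a Kolmogorov-distance bound, with the rate extracted by balancing the block lengths. As a reconstruction of the strategy this is fine.

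Two cautions, though. First, what you have written is a plan rather than a proof: the steps you defer (the Rosenthal bound with a constant depending only on $K,\kappa_1,\gamma$, the choice of the $|t|$-truncation in Esseen smoothing, and the final optimization of $p,q$) are precisely where the exponent $\gamma/2$ and the $(\log n)^{1+\gamma/2}$ factor are actually earned, and none of that bookkeeping is carried out. Second, you attribute the logarithmic loss to the Rosenthal moment bound, but it really enters through the block design: to make the mixing defect $m\,\alpha(q)\lesssim (n/p)\,K e^{-\kappa_1 q}$ polynomially small one is forced to take $q\asymp\log n$, and it is that choice, propagated through the small-block contribution and the normalization, that produces the $\log n$. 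Relatedly, the assertion that $\sigma_n^2\asymp n$ ``follows from the mixing and moment hypotheses'' is not automatic; one needs the asymptotic variance to be strictly positive, which is an implicit non-degeneracy hypothesis of the theorem rather than a consequence of the stated assumptions.
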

Note that the strong mixing coefficient of a time-homogeneous, irreducible and ergodic Markov chain decays to zero  and, in fact, vanishes exponentially fast\cite[Theorem 3.1]{Bradley}.

In this proof, we make use of the following lemma.
\begin{mylemma} \label{C5LMarkov}
If the sequence $X_1S_1 \to X_2S_2 \to X_3S_3 \to \ldots$ forms a Markov chain, then the sequence of conditionally $D$-tilted information densities $\{j_{X_i|S_i}(X_i,D|S_i)\}_{i=1}^{\infty}$ also forms a Markov chain.
\end{mylemma}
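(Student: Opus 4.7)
The plan is to reduce the claim to a statement about deterministic functions of a Markov chain. Since $\{W_i := (X_i, S_i)\}$ is time-homogeneous and starts from the stationary distribution $\pi_{XS}$, every marginal $P_{X_i S_i}$ equals $\pi_{XS}$; by Lemma \ref{C5Lconnection} the conditional $D$-tilted information density is a fixed, time-invariant function $g(x,s) := j_{X|S}(x,D|s) = R'(P_{X|S}(x|s), D \mid P_S(s))$ defined with respect to $\pi_{XS}$, so $J_i = g(W_i)$ for every $i$. The lemma thereby reduces to showing that the deterministic image $\{g(W_i)\}$ of a Markov chain is itself a Markov chain.

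I would establish this by showing injectivity of $g$ on $\mathrm{supp}(\pi_{XS})$. Under injectivity, $W_i$ is recoverable from $J_i$ via $g^{-1}$, so $\sigma(J_1,\ldots,J_n) = \sigma(W_1,\ldots,W_n)$, and by the Markov property of $\{W_i\}$,
\begin{equation*}
P(J_{n+1} \in B \mid J_1,\ldots,J_n) = P(g(W_{n+1}) \in B \mid W_1,\ldots,W_n) = P(g(W_{n+1}) \in B \mid W_n) = P(J_{n+1} \in B \mid J_n),
\end{equation*}
which is the Markov property for $\{J_i\}$. To verify injectivity, I invoke the representation $g(x,s) = i_{X;Y^*|S}(x;y^*|s) + \lambda^* d(x,y^*) - \lambda^* D$ from part (1) of Lemma \ref{C5Ldtid} (valid for any $y^*$ in the support of the optimal test channel $P_{Y^*|XS=(x,s)}$), together with the uniqueness assumption on $P_{Y^*|XS}$ built into Definition \ref{C5Erdf}, to argue that distinct pairs $(x,s) \in \mathrm{supp}(\pi_{XS})$ produce distinct values of $g$.

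The main obstacle is precisely this injectivity step, and it is where the lemma is most delicate: a deterministic function of a Markov chain is not automatically Markov. If two states $w \neq w'$ satisfy $g(w) = g(w')$ but have different one-step transition laws under the kernel of $\{W_i\}$, then $\{J_i\}$ is a genuinely aggregated (hidden) Markov chain and fails the Markov property unless the partition $\{g^{-1}(j)\}_{j}$ satisfies the \emph{lumpability} condition, namely that the conditional probability of transitioning into a given lumped cell is constant on each lumped cell of origin. The cleanest sufficient hypothesis is therefore injectivity of $g$, and the proposal's real burden is to argue that the non-degeneracy assumed in Definition \ref{C5Erdf} (uniqueness of the optimal test channel) together with the strict convexity of $R(P_{X|S}, D \mid P_S)$ in its first argument suffice to rule out coincidences $g(x_1, s_1) = g(x_2, s_2)$ for distinct $(x_1, s_1), (x_2, s_2) \in \mathrm{supp}(\pi_{XS})$; once that is in hand, the display above closes the proof.
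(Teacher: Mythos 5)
Your reduction---that the image of a Markov chain under a deterministic function is Markov provided the function is injective, with lumpability as the fall-back sufficient condition---is precisely the route the paper takes: the paper states a lumpability lemma (its Lemma~\ref{C5LemmaTK}, borrowed from an external reference) and remarks that an injective $f$ makes the conclusion obvious. You are in fact more careful than the paper's own proof, which simply asserts that $j_{X|S}$ is ``a composition of several functions $\log$, $1/t$, $\exp$, summation and $d(\cdot|\cdot)$'' and declares the lemma proved; that sentence establishes neither injectivity nor lumpability, so the burden you explicitly flag is left open in the paper as well.

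The gap is genuine and the injectivity claim you defer is in fact false. In the paper's own binary worked example one has $P_{X|S}(1|0)=P_{X|S}(1|1)=c$, and the paper computes $j_{X|S}(x,D|s)=i_{X|S}(x|s)-H(D)$, which is independent of $s$; hence $g(x,0)=g(x,1)$ for every $x$ in the support, so $g$ is not injective on $\mathrm{supp}(\pi_{XS})$. Uniqueness of the minimizing test channel $P_{Y^*|XS}$ and strict convexity of $R(P_{X|S},D\mid P_S)$ concern the rate-distortion optimization itself, not separation of the values $g(x,s)$ across distinct pairs, so they cannot rule out such collisions. Once injectivity fails, one must check the lumpability condition---that $\Pr\{j_{X_{n+1}|S_{n+1}}(X_{n+1},D\mid S_{n+1})=v \mid X_nS_n=(x,s)\}$ depends on $(x,s)$ only through $g(x,s)$---which is a structural property of the transition kernel not implied by the hypotheses of the lemma, and which neither your proposal nor the paper verifies. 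To close the argument one would need either a proof that the Markov kernel respects the level sets of $g$ under additional hypotheses, or a restatement of the lemma with an explicit lumpability or injectivity assumption.
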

This lemma is proved in section \ref{C5ProofMarkov}
\subsubsection{Achievability proof of Theorem \ref{C5T3}}
In this part, we prove that, for any $\delta >0$, $\sqrt{V_{\infty}} Q^{-1}(\epsilon) + \delta$ is second-order $(\epsilon,D,\mu)$-achievable.

We apply Lemma \ref{C5Lforward} to construct a sequence of $(M_n, D,n,\epsilon_n)$-codes as follows. Choose $\delta_n = \frac{D}{100}$.

 Similar to the proof in \cite[Lemma 4]{KV12}, it can be proved that
\begin{align}
\Pr[D -\delta_n \leq d(X^n,Y^{n*}) \leq D|X^n=x^n] \geq \frac{C}{\sqrt{n}},
\end{align}
when $n$ is sufficiently large, for some constant $C$. Intuitively, this is because $\mathbb{E}[d(X_i,Y_{i}^*)]$ has mean $D$, finite variance. Thus, we can apply Theorem \ref{LemmaBEmarkov} for a sum of weakly dependent variables.

Choose $\beta_n = \frac{\sqrt{n}}{C}$. We have
\begin{align}
\mathbb{E}[\mathbb{E}[|1 - \beta_n \Pr[D -\delta_n \leq d(X^n,Y^{n*}) \leq D|X^n]|^+|S^n]] = 0,	
\end{align} 
when $n$ is sufficiently large.

Choose $\gamma_n = \frac{M}{\sqrt{n}}$. We have
\begin{align}
&e^{-\frac{M}{\gamma_n}} \mathbb{E}\{\mathbb{E}[\min(1,\gamma_n \exp (-j_{X^n|S^n} (X^n,D,S^n)))|S^n]\} \notag \\
&=e^{-\sqrt{n}}\mathbb{E}\{\mathbb{E}[\min(1,\gamma_n \exp (-j_{X^n|S^n} (X^n,D,S^n)))|S^n]\}   \\
&\leq e^{-\sqrt{n}} \mathbb{E} \{ \mathbb{E}[1|S^n]\} \\
&=e^{-\sqrt{n}}.        
\end{align}

Choose 
\begin{align}
\log M_n = n \mu + \sqrt{nV_{n}} Q^{-1} (\hat{\epsilon}_n) + \log \sqrt{n} + \lambda_n^* \frac{D}{100} + \log \frac{\sqrt{n}}{C},
\end{align}
where 
\begin{align}
\hat{\epsilon}_n &\triangleq \epsilon - \frac{B(K,\kappa_1, \gamma)(\log n)^{1+\frac{\gamma}{2}}}{n^{\frac{\gamma}{2}}} -  e^{-\sqrt{n}}
\end{align}
and $B(K,\kappa_1, \gamma)$ is found in Theorem \ref{LemmaBEmarkov}.

Applying Lemma  \ref{C5Lforward},  for $n$ sufficiently large, we have
\begin{align}
\epsilon_n &\leq \Pr \left[j_{X^n|S^n} (X^n,D|S^n) > n \mu  + \sqrt{n V_{n}} Q^{-1} (\hat{\epsilon}_n) \right]  + e^{-\sqrt{n}} \\
           &\leq \Pr \left[\sum_{i=1}^n j_{X_i|S_i} (X_i,D|S_i) > n \mu + \sqrt{nV_n} Q^{-1} (\hat{\epsilon}_n) \right]  + e^{-\sqrt{n}} \\
           &\leq \epsilon \label{C5EforwardM} 
\end{align}
where equation (\ref{C5EforwardM}) follows from   Theorem \ref{LemmaBEmarkov}.

Therefore, we have constructed a sequence of $(M_n, D,n,\epsilon_n)$-codes satisfying
\begin{align}
\limsup_{n \to \infty} \frac{1}{\sqrt{n}} (\log M_n - n \mu ) &=\sqrt{V_{\infty}} Q^{-1}(\epsilon) \\ 
                           \limsup_{n \to \infty} \epsilon_n &\leq \epsilon.
\end{align}

\subsubsection{Converse proof of Theorem \ref{C5T3}}
Let $L$ be   second-order $(\epsilon,D,\mu)$-achievable. In this part, we want to show that $Q^{-1}(\epsilon) \sqrt{V_{\infty}} \leq L + \delta$, for any $\delta >0$.

Since $L$ is  $(\epsilon,D,\mu)$-second-order achievable  there exists a sequence of   $(M_n,n,D,\epsilon_n)$-codes satisfying
 \begin{align}
 \log M_n  &\leq  n \mu + \sqrt{n} (L + \delta), \label{C5Econverse1M}\\
 \limsup_{n \to \infty} \epsilon_n  &\leq \epsilon,  \label{C5Econverse4M}
 \end{align}
when $n$ is sufficiently large.

Using Lemma \ref{C5Lconverse} for $M_n$ satisfying equation (\ref{C5Econverse1M}) and $\gamma = \log \sqrt{n}$, we have
\begin{align}
\epsilon_n &\geq  \Pr[j_{X^n|S^n}(X^n,D|S^n) \geq \log M_n + \log \sqrt{n}] - \frac{1}{\sqrt{n}} \\*
           &=    \Pr\bigg[\sum_{i=1}^n j_{X_i|S_i}(X_i,D|S_i) \geq \log M_n + \log \sqrt{n}\bigg] - \frac{1}{\sqrt{n}} \\
           &\geq \Pr\bigg[\sum_{i=1}^n j_{X_i|S_i}(X_i,D|S_i) \geq  n \mu + \sqrt{n} (L + \delta)  + \log \sqrt{n} \bigg]  - \frac{1}{\sqrt{n}} \\ 
           &\geq \Pr \bigg[\sum_{i=1}^n j_{X_i|S_i}(X_i,D|S_i) - n \mu \geq \sqrt{nV_n} \bigg( \frac{L + \delta}{\sqrt{V_n}}  + \frac{\log \sqrt{n}}{\sqrt{nV_n}} \bigg)  \bigg]   - \frac{1}{\sqrt{n}} \\ 
           &\geq Q \bigg( \frac{L + \delta}{\sqrt{V_n}}  + \frac{\log \sqrt{n}}{\sqrt{nV_n}} \bigg) - \frac{B(K,\kappa_1,\gamma)(\log n)^{1+\frac{\gamma}{2}}}{n^{\frac{\gamma}{2} }}- \frac{1}{\sqrt{n}} \label{C5Econverse2M} \\*
           &= Q \bigg( \frac{L + \delta}{\sqrt{V_n}}   \bigg) + O \bigg( \frac{\log \sqrt{n}}{\sqrt{n}} \bigg)- \frac{B(K,\kappa_1,\gamma)(\log n)^{1+\frac{\gamma}{2}}}{n^{\frac{\gamma}{2} }}- \frac{1}{\sqrt{n}} \label{C5Econverse3M}
\end{align}
where equation (\ref{C5Econverse2M}) follows from  Theorem  \ref{LemmaBEmarkov} and in this equation $B(K,\kappa_1,\gamma)$ is defined in  Theorem  \ref{LemmaBEmarkov}, and (\ref{C5Econverse3M}) follows from the continuity of $Q(\cdot)$ and Taylor expansion.

Combining (\ref{C5Econverse3M}) and (\ref{C5Econverse4M}), we have
\begin{align}
\epsilon &\geq \limsup_{n \to \infty} \epsilon_n \\
         &= Q \bigg( \frac{L + \delta}{\sqrt{V_{\infty}}}\bigg)  \label{C5Econverse6M}
\end{align}
where in (\ref{C5Econverse6M}), we use the fact that $V_n \to V_{\infty}$.
\subsection{Proof of Lemma \ref{C5LMarkov}} \label{C5ProofMarkov}
In the proof of this lemma, we make use of the following lemma.
\begin{mylemma} \label{C5LemmaTK}
Let $\{A_i\}_{i=1}^{\infty}$ be a Markov chain in state space $\mathcal{A}$. Consider the sequence $\{B_i = f(X_i)\}_{i=1}^{\infty}$, where $f:\mathcal{A} \to \mathcal{B}$ is  a function from $\mathcal{A}$ to  $\mathcal{B}$. Suppose that there   exists a function $g:\mathcal{B} \times \mathcal{B} \to \mathbb{R}$ such that 
\begin{align}
\Pr(B_{i+1} = b|X_i = a) = g(f(a),b)
\end{align}
 for any $a \in \mathcal{A}$ and $b \in \mathcal{B}$. Then the sequence $\{B_i\}_{i=1}^{\infty}$ forms a Markov chain.
\end{mylemma}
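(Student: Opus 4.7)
The plan is to verify the Markov property for $\{B_i\}$ directly, namely that for any $n$ and any $b_1,\dots,b_{n+1}$ in $\mathcal{B}$,
\[
\Pr(B_{n+1}=b_{n+1}\mid B_1=b_1,\dots,B_n=b_n)=\Pr(B_{n+1}=b_{n+1}\mid B_n=b_n),
\]
and that this common value depends on $(b_n,b_{n+1})$ only through $g(b_n,b_{n+1})$. The hypothesis that $\Pr(B_{i+1}=b\mid A_i=a)$ is a function of $f(a)$ alone is exactly what we need to ``lump'' the chain $\{A_i\}$ down to $\{B_i\}$, so the argument is essentially a tower-property calculation where the inner conditioning on $A_n$ collapses by this hypothesis.

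First I would decompose the left-hand side by conditioning on $A_n$, writing
\[
\Pr(B_{n+1}=b_{n+1}\mid B_1=b_1,\dots,B_n=b_n)
=\sum_{a\in\mathcal{A}}\Pr(B_{n+1}=b_{n+1}\mid A_n=a,B_1=b_1,\dots,B_n=b_n)\,\Pr(A_n=a\mid B_1=b_1,\dots,B_n=b_n).
\]
Next I would argue that the inner probability simplifies. Since $B_{n+1}=f(A_{n+1})$ and $A_{n+1}$ depends on the past $(A_1,\dots,A_n)$ only through $A_n$ by the Markov property of $\{A_i\}$, and since the sigma-field generated by $(B_1,\dots,B_n)$ is contained in that generated by $(A_1,\dots,A_n)$, we get
\[
\Pr(B_{n+1}=b_{n+1}\mid A_n=a,B_1=b_1,\dots,B_n=b_n)=\Pr(B_{n+1}=b_{n+1}\mid A_n=a)=g(f(a),b_{n+1})
\]
by the hypothesis of the lemma.

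Then I would use the deterministic identity $B_n=f(A_n)$: conditional on $B_n=b_n$, the only values of $a$ that carry positive posterior weight satisfy $f(a)=b_n$, so
\[
\Pr(B_{n+1}=b_{n+1}\mid B_1=b_1,\dots,B_n=b_n)
=\sum_{a:f(a)=b_n}g(b_n,b_{n+1})\,\Pr(A_n=a\mid B_1=b_1,\dots,B_n=b_n)=g(b_n,b_{n+1}),
\]
where I pulled $g(b_n,b_{n+1})$ out of the sum and used that the remaining posterior sums to one. Running the same calculation with the conditioning reduced to $\{B_n=b_n\}$ yields $\Pr(B_{n+1}=b_{n+1}\mid B_n=b_n)=g(b_n,b_{n+1})$, giving equality of the two conditional probabilities and hence the Markov property.

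The only delicate step is justifying the collapse $\Pr(B_{n+1}=b_{n+1}\mid A_n=a,B_1=b_1,\dots,B_n=b_n)=\Pr(B_{n+1}=b_{n+1}\mid A_n=a)$; I would spell it out by noting that $B_{n+1}$ is $\sigma(A_{n+1})$-measurable, that $\sigma(A_{n+1})$ is conditionally independent of $\sigma(A_1,\dots,A_{n-1})$ given $A_n$, and that $\sigma(B_1,\dots,B_n)\subseteq\sigma(A_1,\dots,A_n)$. I would also remark that, as a by-product, the proof identifies $g$ as the transition kernel of $\{B_i\}$, so the chain is time-homogeneous whenever the original $\Pr(B_{i+1}=b\mid A_i=a)$ is independent of $i$, which is the setting in which the lemma will be applied.
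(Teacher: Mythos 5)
Your proof is correct and self-contained. The paper itself does not actually prove this lemma; it defers to an external reference (Lemma 13 of the cited work), so there is no in-paper proof to compare against. Your argument is the standard lumpability calculation: condition on $A_n$, collapse the extra conditioning on $(B_1,\dots,B_n)$ using that $\sigma(B_1,\dots,B_n)\subseteq\sigma(A_1,\dots,A_n)$ together with the Markov property of $\{A_i\}$, invoke the hypothesis $\Pr(B_{n+1}=b\mid A_n=a)=g(f(a),b)$, and then observe that conditional on $B_n=b_n$ only those $a$ with $f(a)=b_n$ carry posterior mass, so $g(b_n,b_{n+1})$ factors out and the posterior sums to one. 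The same computation with the conditioning reduced to $\{B_n=b_n\}$ gives the matching right-hand side, establishing the Markov property, and your observation that $g$ is the resulting transition kernel is a correct and useful by-product. Two minor points worth noting: the statement (and hence your proof) writes $X_i$ where $A_i$ is clearly intended, a typo inherited from the paper, and, as usual in the discrete setting, the conditional probabilities should be read as defined only on conditioning events of positive probability.
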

The proof of this lemma can be found in \cite[Lemma 13]{Konstan09}. Note that if $f$ is one-to-one, then it is obvious that the sequence generated by $f$ acting on a Markov chain is also a Markov chain.

Here, $j_{X|S}$ is a composition of several functions $\log$, $\frac{1}{t}$ for $t \neq 0$, $\exp$, summation and $d(.|.)$. So, Lemma \ref{C5LMarkov} follows from Lemma \ref{C5LemmaTK}.
\subsection*{Acknowledgments}
The authors would like to thank Anshoo Tandon for several helpful discussions, and also Victoria Kostina for   prompt and detailed clarifications of her works. 

The works of Sy-Quoc Le and Mehul Motani are supported in part by National University of Singapore under Research Grant WBS R-263-000-579-112.

The work of Vincent Tan   is supported by National University of Singapore under Research Grant  R-263-000-A98-750/133.

\bibliographystyle{unsrt}
\bibliography{myrefOPT}
\end{document}